\documentclass[sigconf, nonacm]{acmart}

\usepackage{cleveref}
\usepackage{pifont}
\usepackage{bbm}
\usepackage{algorithm}
\usepackage[noend]{algpseudocode}
\usepackage{subcaption}

\newcommand{\Credit}{\mathcal{C}}
\newcommand{\Bad}{\mathcal{B}}
\newcommand{\Good}{\mathcal{G}}
\newcommand{\gammadeg}{d_\gamma}
\newcommand{\Exp}[1]{\mathbb{E}_{#1}}
\newcommand{\Prob}[1]{\mathbb{P}_{#1}}
\newcommand{\ie}{i.e.,}
\newcommand{\cs}{\tau}

\newtheorem{note}{Note}

\algnewcommand{\IIf}[1]{\State\algorithmicif\ #1\ \algorithmicthen}
\algnewcommand{\EndIIf}{\unskip}

\newcommand\vldbdoi{XX.XX/XXX.XX}

\newcommand\vldbavailabilityurl{https://github.com/SimonePellegrini/Dynamic-Quasi-Clique}

\begin{document}
\title{Detecting Large Quasi-cliques on Dynamic Networks}

\author{Luciano Gual\`{a}}
\orcid{0000-0001-6976-5579}
\affiliation{%
  \institution{\emph{Tor Vergata} University of Rome}
  \city{Rome}
  \state{Italy}
}
\email{guala@mat.uniroma2.it}

\author{Simone Pellegrini}
\affiliation{%
  \institution{\emph{Tor Vergata} University of Rome}
  \city{Rome}
  \state{Italy}
}
\email{simonepell2003@gmail.com}

\author{Luca Pep\`{e} Sciarria}
\orcid{0000-0003-4432-6099}
\affiliation{%
  \institution{\emph{Tor Vergata} University of Rome}
  \city{Rome}
  \state{Italy}
}
\email{luca.pepesciarria@gmail.com}

\author{Alessandro Straziota}
\orcid{0009-0008-4543-786X}
\affiliation{%
  \institution{\emph{Tor Vergata} University of Rome}
  \city{Rome}
  \state{Italy}
}
\email{alessandro.straziota@uniroma2.it}

\begin{abstract}
Motivated by the problem of detecting large and cohesive groups of vertices in real networks, the task of finding large \emph{quasi-cliques} has attracted considerable attention across different research areas. From a computational complexity perspective, strong inapproximability results are known for this problem, yet several heuristics have been proposed to identify large quasi-cliques in real-world networks. Recently, [Pang \emph{et al.}, (WWW 2024)] introduced a similarity-based approach that represents the current state of the art.
In this work, we extend that approach to \emph{dynamic} networks, thereby addressing an open problem posed by [Pang \emph{et al.}, (WWW 2024)]. 

We first present a Baseline fully dynamic algorithm where edges of the network can be both inserted and deleted. The algorithm exactly maintains the same quasi-clique returned by the algorithm by Pang et al. on the current graph, with update time $\widetilde{O}(\Delta)$, where $\Delta$ is the maximum degree. We then focus on the practically relevant incremental case, where only edge insertions are allowed, and design an algorithm with $O(\log \Delta)$ update time. This method leverages a novel technique for dynamically maintaining accurate estimates of vertex $\gamma$-degrees, a core component of framework by Pang et al., and achieves up to $207\times$ speed-up over the Baseline while preserving comparable solution quality. Finally, we extend the approach to the fully dynamic setting, supporting both insertions and deletions, obtaining up to $21\times$ speed-up with limited and acceptable loss in quasi-clique size and density. We provide a formal analysis of our algorithms and validate them through an extensive set of experiments on real-world datasets.
\end{abstract}

\maketitle


\ifdefempty{\vldbavailabilityurl}{}{
\vspace{.3cm}
\begingroup\small\noindent\raggedright
\textbf{Artifact Availability:}\\
The source code, data, and/or other artifacts have been made available at \url{\vldbavailabilityurl}.
\endgroup
}

\section{Introduction}

Identifying large dense subgraphs is a fundamental task in network analysis, with applications spanning community detection in online social platforms \cite{GirvanNewman2002,LeskovecLangMahoney2010}, functional module discovery in biological networks \cite{BaderHogue2003,SpirinMirny2003}, fraud detection in financial systems \cite{HooiEtAl2016,PanditEtAl2007}, and product recommendation in e-commerce \cite{LindenSmithYork2003,McAuleyLeskovec2015}. In many of these domains, the presence of a large and cohesive group of vertices often signals a meaningful structural or functional pattern. As modern networks routinely comprise millions or even billions of edges, designing algorithms capable of efficiently detecting large dense substructures has become a central challenge in graph mining.

A widely adopted notion to capture dense connectivity is that of an $\alpha$-quasi-clique. Given a graph $G=(V,E)$ and a parameter $\alpha \in (0,1]$, a subset of vertices $S \subseteq V$ is said to be an $\alpha$-quasi-clique if $S$ induces a subgraph of $G$ having at least a fraction $\alpha$ of the edges of a complete graph over the same vertices (see \Cref{fig:quasi-clique-example}). When $\alpha = 1$, this definition coincides with the classical notion of a clique. For $\alpha < 1$, quasi-cliques relax the all-to-all connectivity requirement, making them significantly more robust to noise and missing edges -- features that are ubiquitous in real-world data. As a result, $\alpha$-quasi-cliques offer a principled and flexible alternative to cliques, balancing structural coherence with tolerance to imperfections in the observed network.

\begin{figure}
    \centering
    \includegraphics[width=0.7\linewidth]{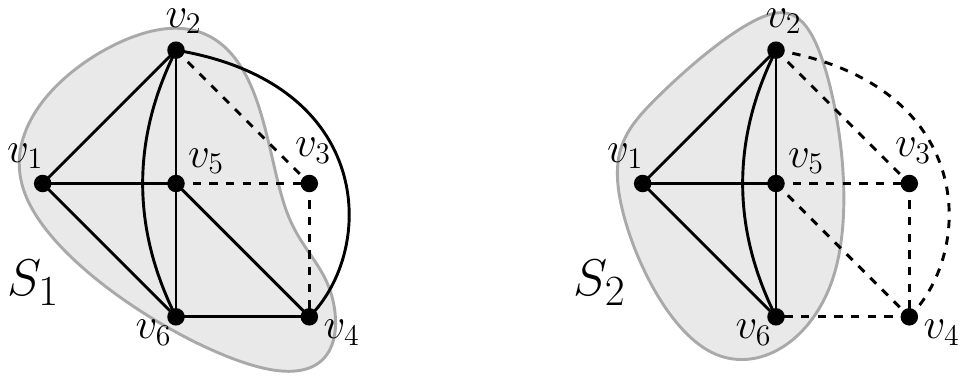}
    \caption{%
    An example of two $0.9$-quasi-clique on the same graph.
    The set $S_2= \{v_1,v_2,v_5,v_6\}$ (on the right) has edge density $1$, since it is a clique, while the set $S_1 = \{v_1,v_2,v_4,v_5,v_6\}$ (on the left) has a lower edge density of $0.9$ but larger size $5$.%
    }
    \label{fig:quasi-clique-example}
    \Description{An example of two $0.9$-quasi-clique on the same graph.}
\end{figure}

From a computational standpoint, however, finding a large quasi-clique is notoriously difficult. The problem is NP-hard \cite{PattilloVBB13} for any fixed $\alpha$ and, in fact, inherits strong inapproximability results: since the maximum clique problem is a special case obtained by setting $\alpha=1$, computing a maximum $\alpha$-quasi-clique cannot be approximated within a factor of $n^{1-\varepsilon}$ for any $\varepsilon > 0$, unless P=NP \cite{Zuckerman07}. This inherent hardness has motivated a substantial body of work devoted to the design of effective heuristics and approximation strategies \cite{AbelloRS02,ChenCPWLZY21,DjeddiHB19,KonarS20,MitzenmacherPPT15,PintoRRR21,PintoRRP18,TsourakakisBGGT13}. A recent milestone in this line of research is the work of Pang et al. \cite{pang}, which introduced the \texttt{NBSim} and \texttt{FastNBSim} algorithms. Their key insight is that vertices within a quasi-clique tend to exhibit similar neighborhoods; by exploiting neighborhood similarity -- further accelerated via \emph{min-hash} approximations \cite{broder1997resemblance} -- \texttt{FastNBSim} achieves orders-of-magnitude speedups over prior methods while maintaining high solution quality on massive real-world graphs.

Despite these advances, existing approaches -- including \texttt{NBSim} and \texttt{FastNBSim} -- operate in the static setting, where the entire graph is given as input and the quasi-clique is computed from scratch. Yet many real-world networks are inherently dynamic: edges are continuously inserted and deleted as users interact, transactions occur, or relationships evolve. Notably, Pang et al. explicitly mention the extension of their techniques to dynamic graphs as an open problem. Addressing this challenge is precisely the focus of our work. We study the problem of maintaining a large quasi-clique in a network subject to an arbitrary sequence of edge insertions and deletions, aiming to update the maintained solution significantly faster than recomputing it from scratch after each modification.

\subsection{Our results}

To the best of our knowledge, we provide the first systematic study of the dynamic maintenance of large quasi-cliques under edge updates, combining theoretical guarantees with an extensive experimental evaluation.

\paragraph{A fully dynamic baseline.}
Using relatively standard dynamic graph techniques, we design a first fully dynamic algorithm that maintains a large quasi-clique under both edge insertions and deletions. Importantly, at any point in time, the maintained quasi-clique coincides exactly with the one that would be computed by the state-of-the-art \texttt{NBSim} algorithm of Pang et al. on the current graph. The algorithm supports updates in time $\widetilde{O}(\Delta)$\footnote{The notation $\widetilde{O}$ hides poly-logarithmic factors.}, where $\Delta$ is the maximum degree of the current graph. Although conceptually simple, this solution already yields a substantial improvement over recomputing the quasi-clique from scratch at query time. For this reason, we consider it a natural and robust Baseline against which more refined heuristics can be compared.

\paragraph{An incremental algorithm with logarithmic update time.} 
We then focus on the incremental setting, where only edge insertions are allowed. This case is particularly relevant in practice: many real networks -- such as collaboration graphs, citation networks, and certain interaction or communication networks -- evolve primarily through the addition of new edges, while deletions are rare, expensive to process, or semantically less meaningful. Moreover, incremental models naturally capture scenarios in which data is accumulated over time (e.g., transaction logs or streaming interactions), making efficient insertion-only maintenance highly desirable.

For this setting, we design an algorithm with amortized\footnote{The amortized analysis is a well-known method originally introduced in \cite{tarjanAmortizedComplexity} that allows to compute tight bounds on the cost of a sequence of operations, rather than the worst-case cost of an individual operation. In more detail, an operation has amortized cost $\overline{c}$ if any arbitrary sequence of $\eta$ operations has a overall cost of at most $\overline{c} \eta$.} update time $O(\log \Delta)$.
In \Cref{sec:incremental} we analyze the effectiveness of the incremental algorithm under the well-known \emph{random permutation model} (that is, where the sequence of insertions is sampled uniformly at random from all possible permutations of edges).
The random permutation model has long served as a useful framework for understanding the behavior of dynamic graph algorithms,
yielding theoretical predictions that have often been confirmed in practice across a broad range of dynamic scenarios \cite{BecchettiCGSSS25,counting_triangles,robust_lowerbound_stream,HanauerHS22,kapralovSODA14,stream_alg_survey,MonemizadehMPS17,PengSODA18}.

Nevertheless, it is natural to question whether such a model accurately reflects the scenarios addressed in this work. To investigate this aspect, we present an extensive experimental evaluation in \Cref{sec:experiments} on incremental graphs resulting from real insertion sequences, suggesting that, for our algorithm, the random permutation model provides a sufficiently accurate description of real scenarios.

From an efficiency perspective, the incremental algorithm achieves substantial speedups, up to $207\times$ on large graphs, compared to the Baseline. Technically, the algorithm relies on a novel technique for dynamically maintaining accurate estimates of the $\gamma$-degree of each vertex, a key ingredient in the approach of Pang et al. \cite{pang}. This mechanism enables us to localize and sharply reduce the work performed at each insertion.

\paragraph{Extension to the fully dynamic setting.}
Finally, we suitably extend the incremental approach to handle deletions, obtaining a fully dynamic algorithm capable of supporting arbitrary edge updates. As expected, accommodating deletions -- an inherently more challenging scenario -- incurs additional overhead, and the resulting algorithm is slower than the incremental variant. Nevertheless, although for this algorithm is not possible to prove  an $o(\Delta)$ update time in the worst case, on several real-world datasets it still achieves significant speedups, up to $21\times$ over the Baseline, while incurring only a limited and acceptable loss in solution quality in terms of both size and density of the maintained quasi-clique.

All our algorithms are validated through an extensive experimental study on multiple dynamic and static real-world networks. Overall, our results demonstrate that dynamic maintenance of large quasi-cliques is not only feasible but can be performed efficiently in practice, offering different quality-efficiency trade-offs tailored to the application scenario.

\subsection{Related works}
As we already pointed out, we are not aware of previous work on the dynamic maintenance of large quasi-cliques, although dynamic algorithms has a long tradition in the theoretical computer science community (see \cite{HanauerHS22} for a recent survey) as well as in the graph mining community (e.g., \cite{BecchettiCGSSS25,SunCS20,ZhangGGW26}). Among existing research directions, the line of work most closely related to ours is that on dynamic algorithms for structural clustering \cite{RuanGWW21,ZhangGGW26,ZhaoGRB0025}. An attentive reader may observe similarities between the approach of Pang et al. and the well-known SCAN algorithm for structural clustering. In particular, both approaches exploit forms of neighborhood similarity to identify cohesive vertex sets. However, to the best of our knowledge, it remains unclear whether the techniques developed in \cite{RuanGWW21,ZhangGGW26,ZhaoGRB0025} to make (variants of) SCAN dynamic can be adapted to the problem of maintaining large quasi-cliques.

In the static setting, beyond the aforementioned work of Pang et al. \cite{pang} and other contributions focusing on fast heuristics for the maximum quasi-clique problem \cite{AbelloRS02,ChenCPWLZY21,DjeddiHB19,KonarS20,MitzenmacherPPT15,PintoRRR21,TsourakakisBGGT13}, several papers propose exact algorithms for computing maximum quasi-cliques \cite{MarinelliPR21,MiaoB20,PattilloVBB13,RibeiroR19,VeremyevPBP16}. These exact approaches typically rely on sophisticated pruning and branch-and-bound strategies, but their computational cost limits their applicability to relatively small or medium-sized graphs. 

Another notion of dense structure (conceptually different from quasi-cliques) that has attracted significant attention is the \emph{densest subgraph problem}  \cite{Goldberg1984FindingAM,MitzenmacherPPT15,Tsourakakis15a,XuMFB24}. 
The problem has also been extensively studied in dynamic graphs, where efficient algorithms are known for maintaining approximate solutions under edge updates \cite{BeraBCG22,BhattacharyaHNT15,EpastoLS15,SawlaniW20}. Finally, quasi-cliques have been considered also in the context of \emph{temporal} graphs \cite{temporal_quasi_clique}.

\subsection{Paper organization}
The remainder of this paper is organized as follows.
In \Cref{sec:preliminaries}, we describe the \texttt{FastNBSim} algorithm by Pang \emph{et al.}~\cite{pang} and establish the necessary notation.
In \Cref{sec:baseline}, we demonstrate how \texttt{FastNBSim} can be extended via simple arguments to support edge insertions and deletions in $\widetilde{O}(\Delta)$ time.
In \Cref{sec:incremental}, we propose a faster incremental algorithm based on the ideas of \texttt{FastNBSim}, providing a thorough analysis of both its effectiveness and efficiency.
In \Cref{sec:fully}, we introduce a heuristic that extends this incremental approach to also support edge deletions, resulting in a fully dynamic algorithm.
Finally, \Cref{sec:experiments} presents an extensive experimental evaluation on both synthetic and real-world sequences to validate the performance of our algorithms.

\section{Preliminaries}\label{sec:preliminaries}
We consider unweighted and undirected graph $G=(V,E)$, where $V$ is the set of  vertices and $E \subseteq V \times V$ the set of edges.
We denote $n = \vert V \vert$ and $m = \vert E \vert$.
We say that two vertices $u,v$ are \emph{neighbors} in $G$ if $(u,v) \in E$.
The degree $d(u)$ of a vertex $u$ is the number of its neighbors, while the neighborhood $N(u)$ of $u$ is the set of vertices composed by $u$ and all its neighbors.
We denote with $\Delta$ the maximum degree in $G$.

The definition of quasi-clique is based on that of \emph{edge-density}, that is, the fraction of edges that a subset of vertices has with respect to the number of possible edges.
Formally, let $S \subseteq V$ be any subset of vertices, and $E(S) \subseteq E$ the set of graph edges among all pairs of vertices in $S$.
The edge-density of $S$ is defined as 
\[\delta(S) = \frac{\vert E(S) \vert}{\binom{\vert S \vert}{2}} =  \frac{2 \vert E(S) \vert}{\vert S \vert (\vert S \vert - 1)}.
\]
For a given value $\alpha \in [0,1]$, we say that $S$ is an \emph{$\alpha$-quasi-clique} if $\delta(S) \geq \alpha$ (see \Cref{fig:quasi-clique-example}).

Other important definitions that will be used extensively throughout the paper are those of $\gamma$-neighbor and $\gamma$-degree. Formally, given a value $\gamma \in [0,1]$, a vertex $v$ is a $\gamma$-neighbor of $u$ if $v \in N(u)$ and $\vert N(v) \vert \geq \gamma \vert N(u) \vert$.
Hence, the $\gamma$-degree $\gammadeg(u)$ of a vertex $u$ is the number of its $\gamma$-neighbors, \ie\
\[
\gammadeg(u) = \vert \{ v \in N(u) \mid \vert  N(v) \vert \geq \gamma \vert N(u) \vert \} \vert.
\]
In Pang et al.~\cite{pang}, they propose a greedy method for extracting dense components based on \emph{similarity} between vertex neighborhoods.
As similarity measure, they use the \emph{containment score}, which indicates how well the neighborhood $N(u)$ of a vertex $u$ is contained within that of another vertex $v$.
Formally, the containment score of $u$ in $v$ is defined as 
$
\cs(u,v) = \frac{\vert N(u) \cap N(v) \vert}{\vert N(u) \vert}.
$
Observe that the containment score is non-symmetric (\ie\ $\cs(u,v)$ might be different from $\cs(v, u)$). 

Computing the exact containment score requires $\Omega(\Delta)$ time.
To efficiently estimate $\cs$ we can use the \emph{min-hash sketch} \cite{broder1997resemblance}.
Consider any permutation $\pi: [n] \to [n]$, the \emph{min-hash} $h_\pi(A)$ of a subset $A \subseteq [n]$ is defined as the minimum value of the permutation $\pi(A)$, \ie\ $h_\pi(A) = \min_{a \in A} \pi(a)$.
Given $k \geq 1$ permutations $\pi_1, \dots, \pi_k$ of $[n]$, we will refer to $\sigma(A) = (h_{\pi_1}(A), \dots, h_{\pi_k}(A))$ as the \emph{$k$-min-hash signature} of $A$, for every $A \subseteq [n]$.

The main property of the min-hash is that, if we sample uniformly at random a permutation $\pi$ from all the possible permutation of $[n]$, then for every pair of sets $A,B \subseteq [n]$ we have that $\Prob{\pi}\{h_\pi(A) = h_\pi(B)\} = J(A,B)$, where $J(A,B) = \frac{\vert A \cap B \vert}{\vert A \cup B \vert}$ is the \emph{Jaccard similarity} between $A$ and $B$.
Given the $k$-min-hash signatures of $N(u)$ and $N(v)$, it is well-known how to compute an unbiased estimation of $\cs(u,v)$ in $O(k)$ time (see for example \cite{pang}).

\subsection{\texttt{FastNBSim}: a fast algorithm for static graphs}\label{sec:pang-alg}

In this section we briefly describe the algorithm by Pang \emph{et al.} for extracting large quasi-cliques from a static graph \cite{pang}. 
Briefly, the algorithm first computes a set $S_u$ for each vertex $u$, which represents a potential solution, and then returns the largest one.
The procedure that computes $S_u$ is called \texttt{extract}, and is defined as follows:
Let $b \in (0,1]$ and $\gamma \in (0,1]$ be two input parameters.
We initially define $S_u$ as the set of vertices $v \in N(u)$ for which the containment score $\cs(u,v)$ of $u$ is at least $\gamma$.\footnote{Notice that, since $u \in N(u)$ we always have that $u \in S_u$.}
Then the parameter $b$ is used to guarantee a lower bound on the density of $S_u$.
In particular, if $\frac{\vert S_u \vert - 1}{\vert N(u) \vert} < b$ then $S_u$ is discarded (\ie\ $S_u$ is set to empty set).
In fact, by choosing appropriately the parameters $(b, \gamma)$, it is possible to show that if $S_u \neq \emptyset$, then its density is at least $\alpha$.

Since performing the \texttt{extract} procedure from each vertex would result in an inefficient algorithm in practice, an additional heuristic based on the $\gamma$-degree is exploited.
In fact, the $\gamma$-degree $\gammadeg(u)$ of a vertex $u$ is an upper-bound to the size of its quasi-clique $S_u$. Therefore, by scanning the vertices in non-increasing order of $\gammadeg$, it is sufficient to stop the scan when $\gammadeg(u)$ is smaller than the current solution found so far, thus \emph{pruning} many vertices and resulting in a fast algorithm in practice.
This algorithm is called \texttt{NBSim}.

Another bottleneck of the algorithm is the calculation of containment scores.
The authors provide a faster, but less accurate, version of \texttt{NBSim}, called \texttt{FastNBSim}, that uses $k$-min-hash signatures to estimate containment scores.
This algorithm results in significant speed-ups, with the small drawback of having slightly less dense quasi-clique.

Note that \texttt{NBSim} and \texttt{FastNBSim} differ only in that the former uses the exact containment scores $\cs(\cdot,\cdot)$ in the \texttt{extract} procedure, while the latter uses the estimator $\hat{\cs}(\cdot,\cdot)$. For this reason, we only report \texttt{FastNBSim} in \Cref{alg:FastNBSim,alg:extract}.

\begin{algorithm}
    \caption{\texttt{FastNBSim}}
    \label{alg:FastNBSim}
    \begin{flushleft}
    \textbf{Input:} a graph $G$, and $\gamma,b \in (0,1]$.
    \end{flushleft}
    \begin{algorithmic}[1]
        \State $S^* \gets \emptyset$\;
        \ForAll{$u$ in non-increasing $\gamma$-degree oreder}
            \IIf{$\gammadeg(u) < \vert S^* \vert$}
                \textbf{break}
            \EndIIf
            \State $S \gets \texttt{extract}(u, \gamma, b)$\;
            \IIf{$\vert S \vert > \vert S^* \vert$}
                $S^* \gets S$
            \EndIIf
        \EndFor
        \State \Return $S^*$
    \end{algorithmic}
\end{algorithm}

\begin{algorithm}
    \caption{\texttt{extract}}
    \label{alg:extract}
    \begin{flushleft}
    \textbf{Input:} a vertex $u$, and two parameters $\gamma,b \in (0,1]$.
    \end{flushleft}
    \begin{algorithmic}[1]
        \State $S \gets \emptyset$
        \ForAll{$v \in N(u)$}
            \State $\hat{\cs}(u,v) \gets$ estimate containment score of $u$ in     $v$\;
            \label{alg:extract:line:condition}
            \IIf{$\hat{\cs}(u,v) \geq \gamma$}
                $S \gets S \cup \{v\}$\;
            \EndIIf
    \EndFor
    \IIf{$\frac{|S| - 1}{|N(u)|} < b$} \label{line:clique_bound}
        $S \gets \emptyset$
    \EndIIf
    \State \Return $S$\;
    \end{algorithmic}
\end{algorithm}

\section{A fully-dynamic algorithm}\label{sec:baseline}

As a warm-up, in this section we describe a simple fully-dynamic algorithm with update time $O(\Delta \log n)$ and query time $O(1)$.
The algorithm, at any time, returns a quasi-clique $S^*$ that is \emph{exactly the same solution} that the \texttt{NBSim} algorithm would return if it were run on the graph at that instant.

The general idea is to keep all $S_u$ updated dynamically, along with other satellite information such as all the containment scores.
The techniques for keeping all information dynamically updated are quite simple yet effective, and have been exploited several times in other contexts \cite{BecchettiCGSSS25,RuanGWW21,ZhaoGRB0025}.
We see this algorithm as the baseline against which we will compare our faster solution.

Precisely, at each time, the algorithm keeps the following information updated for each vertex $u$:
\begin{itemize}
    \item the degree $d(u)$ and the neighborhood $N(u)$;

    \item the containment score $\cs(u,v)$, for every neighbor $v \in N(u)$;

    \item the quasi-clique $S_u$, that is, the set of vertices that would be returned by the \texttt{extract} procedure in \texttt{NBSim} (\ie\ using exact containment scores at Line~4 of \Cref{alg:extract:line:condition});

    \item the set $\widetilde{S}_u = \{v \in N(u) \mid \cs(u,v) \geq \gamma \}$. Notice that when $\frac{\vert S_u \vert - 1}{\vert N(u) \vert} \ge b$, then $S_u$ coincides with $\widetilde{S}_u$.
\end{itemize}

The algorithm keeps the $S_u$s sorted by size in a priority queue, so that the pointer to the largest can be returned in $O(1)$ time\footnote{observe that if we want to list all the elements of $S^*$ it takes $O(\vert S^* \vert)$.}.

We now describe how all the above information can be updated in $O(\Delta \log{n})$ time when a new edge $(x,y)$ is added to or removed from the graph.
Clearly, updating the degrees and neighborhoods of $x$ and $y$ can be done in constant time.
The main challenge is to efficiently update the affected containment scores.
Notice that only $O(d(x)+d(y)) = O(\Delta)$ containment scores may change, namely the scores $\cs(x, w), \cs(w,x)$ for $w \in N(x)$, and $\cs(y, w), \cs(w,y)$ for $w \in N(y)$. 
First of all, the scores $\cs(x,y)$ and $\cs(y,x)$ can be computed in $O(\Delta)$ time by looking at $N(u)$ and $N(v)$ explicitly.
Instead, all the others $O(\Delta)$ containment scores can be updated in constant time each using a simple closed formula, as shown by the following lemmata
(whose proof are deferred to Appendix~\ref{apx:deferred_proofs}).

Denote by $\cs'(\cdot,\cdot)$ the containment score \emph{after} the update (\ie\ the insertion or the deletion) of the edge $(x,y)$. 

\begin{lemma}\label{lm:cs_update_delete}
    Consider the deletion of edge $(x,y)$. For each $w \in N(x) \cap N(y) \setminus \{y\}$, we have $\cs'(w,x)=\frac{\cs(w,x)\vert N(w) \vert -1}{\vert N(w) \vert}$, and $\cs'(x,w)=\frac{\cs(x,w)\vert N(x)\vert -1}{\vert N(x)\vert -1}$.
    While, for each $w \in N(x) \setminus N(y)$, we have $\cs'(w,x)=\cs(w,x)$, and $\cs'(x,w)=\frac{\cs(x,w)\vert N(x)\vert}{\vert N(x)\vert -1}$.
\end{lemma}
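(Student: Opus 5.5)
The plan is a direct computation from the definition $\cs(u,v)=\vert N(u)\cap N(v)\vert/\vert N(u)\vert$, where $N(\cdot)$ is the closed neighborhood. The deletion of $(x,y)$ changes only two neighborhoods: $N'(x)=N(x)\setminus\{y\}$ and $N'(y)=N(y)\setminus\{x\}$. Every other $N(w)$ is unchanged. In particular, for $w\neq x,y$ we have $\vert N'(w)\vert=\vert N(w)\vert$ and $\vert N'(x)\vert=\vert N(x)\vert-1$. So in each case it suffices to track how the intersection $\vert N(w)\cap N(x)\vert$ changes, and then rewrite it as $\cs(w,x)\vert N(w)\vert$ or $\cs(x,w)\vert N(x)\vert$ before the update.

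The key step is to compute $N'(w)\cap N'(x)=N(w)\cap (N(x)\setminus\{y\})=(N(w)\cap N(x))\setminus\{y\}$ for $w\in N(x)$ with $w\neq x,y$. Whether this loses an element depends only on whether $y\in N(w)$, i.e., whether $w\in N(y)$ (recall $N(y)$ contains $y$, which is why $y$ is excluded explicitly).

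\textbf{Case $w\in N(x)\cap N(y)\setminus\{y\}$.} Here $y\in N(w)\cap N(x)$, so the intersection drops by exactly one. Dividing by $\vert N(w)\vert$ gives $\cs'(w,x)=(\cs(w,x)\vert N(w)\vert-1)/\vert N(w)\vert$. Dividing by $\vert N'(x)\vert=\vert N(x)\vert-1$ gives $\cs'(x,w)=(\cs(x,w)\vert N(x)\vert-1)/(\vert N(x)\vert-1)$.

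\textbf{Case $w\in N(x)\setminus N(y)$.} Here $y\notin N(w)$, so the intersection is unchanged. Since $\vert N(w)\vert$ is also unchanged, $\cs'(w,x)=\cs(w,x)$. The denominator for $\cs(x,w)$ shrinks by one, giving $\cs'(x,w)=\cs(x,w)\vert N(x)\vert/(\vert N(x)\vert-1)$.

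There is no real obstacle. The only care needed is with the closed-neighborhood convention: it is why $y$ must be excluded from the first case, and why $w=x$ is implicitly excluded (for $w=x$ both scores equal $1$ trivially). One should also confirm that no $w$ other than $x,y$ has its neighborhood altered.
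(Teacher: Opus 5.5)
Your proposal is correct and follows the paper's proof essentially verbatim. Both arguments note that only $N(x)$ and $N(y)$ change, split on whether $y \in N(w)$, and rewrite the intersection size as $\cs(w,x)\vert N(w)\vert$ or $\cs(x,w)\vert N(x)\vert$. Your remark that $w=x$ must be excluded is a useful precision: $x$ lies in $N(x)\cap N(y)\setminus\{y\}$ before the deletion, yet $\cs'(x,x)=1$. The paper handles this only implicitly, by restricting to $w\notin\{x,y\}$.
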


\begin{lemma}\label{lm:cs_update_insert}
    Consider the insertion of edge $(x,y)$. For each $w \in N(x) \cap N(y) \setminus \{y\}$, we have $\cs'(w,x)=\frac{\cs(w,x)\vert N(w) \vert + 1}{\vert N(w) \vert}$, and  $\cs'(x,w)=\frac{\cs(x,w)\vert N(x)\vert +1}{\vert N(x)\vert +1}$. While, for each $w \in N(x) \setminus N(y)$, we have $\cs'(w,x)=\cs(w,x)$, and $\cs'(x,w)=\frac{\cs(x,w)\vert N(x)\vert}{\vert N(x)\vert +1}$.
\end{lemma}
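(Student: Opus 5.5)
The proof is a direct computation from the definition $\cs(u,v)=\frac{\vert N(u)\cap N(v)\vert}{\vert N(u)\vert}$. Throughout, $N(\cdot)$ denotes neighborhoods \emph{before} the insertion and $N'(\cdot)$ neighborhoods after it.

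First I record how the neighborhoods change. Since $(x,y)\notin E$ before the insertion, we have $y\notin N(x)$ and $x\notin N(y)$. After the insertion,
\[
N'(x)=N(x)\cup\{y\}, \qquad N'(y)=N(y)\cup\{x\}, \qquad N'(w)=N(w) \text{ for every } w\notin\{x,y\}.
\]
The statement concerns vertices $w\in N(x)$ other than $x$ and $y$. (For $w=x$ one has $\cs(x,x)=\cs'(x,x)=1$ trivially, and $x\notin N(y)$ anyway.) So $N(w)$ is unchanged for every $w$ considered, and $\vert N'(x)\vert=\vert N(x)\vert+1$.

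Next I compute the new intersection. We have
\[
N(w)\cap N'(x)=\bigl(N(w)\cap N(x)\bigr)\cup\bigl(N(w)\cap\{y\}\bigr),
\]
and this union is disjoint because $y\notin N(x)$. Moreover $y\in N(w)$ iff $w\in N(y)$, since neighborhood membership is symmetric for $w\neq y$. Hence
\[
\vert N(w)\cap N'(x)\vert=\vert N(w)\cap N(x)\vert+\mathbbm{1}[w\in N(y)].
\]
Writing $\vert N(w)\cap N(x)\vert=\cs(w,x)\vert N(w)\vert=\cs(x,w)\vert N(x)\vert$, I then divide by the appropriate denominator:
\begin{itemize}
\item For $\cs'(w,x)$ the denominator is $\vert N'(w)\vert=\vert N(w)\vert$. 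This gives $\frac{\cs(w,x)\vert N(w)\vert+1}{\vert N(w)\vert}$ when $w\in N(x)\cap N(y)$, and $\cs(w,x)$ unchanged when $w\in N(x)\setminus N(y)$.
\item For $\cs'(x,w)$ the denominator is $\vert N'(x)\vert=\vert N(x)\vert+1$. This gives $\frac{\cs(x,w)\vert N(x)\vert+1}{\vert N(x)\vert+1}$ in the first case and $\frac{\cs(x,w)\vert N(x)\vert}{\vert N(x)\vert+1}$ in the second.
\end{itemize}

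There is no real obstacle. The only care needed is bookkeeping: fix that $N(\cdot)$ refers to the pre-update graph, exclude $w\in\{x,y\}$ so that $N(w)$ is unchanged, and use $y\notin N(x)$ so the added element is counted exactly once.
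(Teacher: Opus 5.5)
Your proof is correct and follows the paper's own argument: compute $\vert N'(w)\cap N'(x)\vert$ as $\vert N(w)\cap N(x)\vert$ plus one exactly when $w\in N(y)$, then divide by $\vert N(w)\vert$ or $\vert N(x)\vert+1$. The paper writes the two cases out separately where you use a single indicator. Your explicit exclusion of $w=x$ is a useful bit of care the paper leaves implicit: since $N(\cdot)$ is the closed neighborhood, the second formula would wrongly give $\vert N(x)\vert/(\vert N(x)\vert+1)$ there instead of $1$.
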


After each containment score is updated, we can decide whether a neighbor $w \in N(x)$ belongs to $\widetilde{S}_x$ or not simply by checking the condition $\cs(x,w) \geq \gamma$. 
At the same time, however, all neighbors $w \in N(x) \setminus \{x\}$ only have to check whether to add or remove $x$ from $\widetilde{S}_w$.
The same holds for $y$.
We therefore update $O(\Delta)$ sets $\widetilde{S}$ in time $O(\Delta)$. 
Finally, if the condition $\frac{\vert \widetilde{S}_z \vert - 1}{\vert N(z) \vert} \geq b$ holds we set $S_z = \widetilde{S}_z$, otherwise $S_z = \emptyset$, and update the priority queue, for every $z \in N(x) \cup N(y)$.
The total worst-case update time is $O(\Delta \log{n})$.

\section{A faster solution for incremental networks}\label{sec:incremental}
As seen in \Cref{sec:pang-alg}, a key ingredient for the \texttt{FastNBSim} algorithm is the $\gamma$-degree, which (i) defines an order in which we explore the vertices for the quasi-clique extraction, and (ii) offers a pruning policy that allows us to not explore the entire graph.
The $\gamma$-degree could therefore be used for a dynamic algorithm, however, there is the problem of maintaining it dynamically.
In fact, even in the case of \emph{incremental graphs} (where only new edges are added), updating it would require a linear cost in the degree.

In this section we show how to efficiently maintain a $\gamma$-degree estimator for incremental graphs, where only new edges arrive and none are removed.
We then propose a dynamic algorithm which leverages the $\gamma$-degree estimator to efficiently updates and extracts large quasi-cliques.

\subsection{A \texorpdfstring{$\gamma$}{gamma}-degree estimator}
The idea for maintaining an estimate of the $\gamma$-degree is the following: for each vertex $u$ we maintain a counter $\Credit(u)$, which represents an amount of \emph{credits} that $u$ holds.
Now assume that at a time $t$ a new edge $(u,v)$ is inserted.
If, after the insertion of the edge $(u,v)$, the degree of $v$ is at least $\gamma$ times the degree of $u$, then we say that $v$ \emph{gives} a credit to $u$, \ie\ we just increment $\Credit(u)$ by one.
The intuition is that, since $\gamma \leq 1$, if $v$ gives a credit to $u$ this implies that $v$ was a $\gamma$-neighbor of $u$ at the time $t$ the edge $(u,v)$ was inserted\footnote{Observe that $u$ is always a $\gamma$-neighbor of itself hence we initialize $\Credit(u) = 1$.}.

We now show that the number of credits can be used to compute a good estimation of the $\gamma$-degree.
In the following analysis we assume the well-known \emph{random permutation model}, in which the edge insertion sequence is chosen uniformly at random among all the permutation of the edges set $E$ at the end of the sequence.

Let $d^t(u)$ be the degree of a vertex $u$ after the insertion of $t$ edges, and $d(u) = d^m(u)$ be the degree of $u$ at the end of the edge insertion sequence.
We now show that $d^t(u)$ concentrates around $\frac{t}{m}d(u)$.

\begin{lemma}\label{lemma:concentration_degree}
    For every $\varepsilon > 0$, for every time $t$, and for every vertex $u \in V$, we have
    \[
    \Prob{}\left\{ \left\vert d^t(u) - \frac{t}{m}d(u) \right\vert \geq \varepsilon \frac{t}{m}d(u) \right\} \leq 2\exp\left( -\frac{1}{3} \varepsilon^2 \frac{t}{m}d(u)\right).
    \]
\end{lemma}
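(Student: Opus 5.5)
Under the random permutation model, the first $t$ inserted edges form a uniformly random $t$-subset of $E$. Hence $d^t(u)$ counts how many of the $d(u)$ edges incident to $u$ land in a random $t$-subset of the $m$ edges. This means $d^t(u)$ is hypergeometric with population $m$, $d(u)$ "successes", and $t$ draws. Its mean is $\mu = \frac{t}{m} d(u)$, and the claimed inequality is a two-sided multiplicative Chernoff bound around $\mu$ with the standard constant $\frac{1}{3}$.

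The plan is to reduce to the binomial case via Hoeffding's classical comparison theorem. For sampling without replacement, $\Exp{}[f(X)] \le \Exp{}[f(Y)]$ for every continuous convex $f$. Here $X$ is the hypergeometric sum (sampling without replacement) and $Y$ is the corresponding sum with replacement, i.e. $Y \sim \mathrm{Bin}(t, d(u)/m)$, which has the same mean $\mu$.

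Applying this with $f(x)=e^{\lambda x}$ for any $\lambda\in\mathbb{R}$ shows that the moment generating function of $d^t(u)$ is dominated by that of the binomial. The standard Chernoff derivations for the upper and lower tails use only the MGF bound $\Exp{}[e^{\lambda Y}] \le \exp(\mu(e^\lambda-1))$ and Markov's inequality, so they carry over verbatim. They give
\[
\Prob{}\{d^t(u)\ge(1+\varepsilon)\mu\}\le \exp(-\varepsilon^2\mu/3)
\quad\text{and}\quad
\Prob{}\{d^t(u)\le(1-\varepsilon)\mu\}\le \exp(-\varepsilon^2\mu/2)\le\exp(-\varepsilon^2\mu/3).
\]
A union bound over the two tails yields the factor $2$.

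An alternative that avoids citing Hoeffding directly is to write $d^t(u)=\sum_{e\ni u} X_e$, where $X_e$ indicates that edge $e$ is among the first $t$. These indicators are negatively associated (as in a uniform random subset of fixed size), and Chernoff bounds hold for negatively associated indicators. I would cite either fact.

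The main obstacle is the upper-tail constant when $\varepsilon>1$. The bound $\exp(-\varepsilon^2\mu/3)$ from $\left(\frac{e^\varepsilon}{(1+\varepsilon)^{1+\varepsilon}}\right)^\mu$ is valid only for $\varepsilon\in(0,1]$. For $\varepsilon>1$ I would handle the cases separately:
\begin{itemize}
\item If $(1+\varepsilon)\mu > d(u)$, the upper tail is impossible, since $d^t(u)\le d(u)$.
\item The lower tail is trivially zero once $\varepsilon>1$.
\item Otherwise, I would accept the weaker standard form $\exp(-\varepsilon\mu/3)$, or restrict the statement to $\varepsilon\in(0,1]$, which is the regime used later in the paper.
\end{itemize}
I would state this caveat explicitly.
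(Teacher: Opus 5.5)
Your argument is correct for $\varepsilon\in(0,1]$ and follows the paper's outline. You write $d^t(u)=\sum_{v\in N(u)}X_v$, where $X_v$ indicates that $(u,v)$ is among the first $t$ insertions. Then $\mathbb{E}X_v=t/m$, and a two-sided multiplicative Chernoff bound finishes the proof.

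You differ from the paper in how the Chernoff step is justified. The paper simply asserts that the $X_v$ are independent, and that is false. For $1\le t<m$ one has $\Pr\{X_v=X_w=1\}=\frac{t(t-1)}{m(m-1)}<\frac{t^2}{m^2}$, and $d^t(u)$ is exactly the hypergeometric variable you describe. Your Hoeffding convex-order comparison with $\mathrm{Bin}(t,d(u)/m)$ supplies the missing ingredient. So does your alternative, negative association of the indicators of a uniformly random $t$-subset. Either one gives $\mathbb{E}\,e^{\lambda d^t(u)}\le e^{\mu(e^\lambda-1)}$ for all real $\lambda$, after which the usual tail computations go through unchanged. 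Your proof is the paper's one-line argument made rigorous.

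Your caveat about $\varepsilon>1$ is also well founded. In fact the lemma as stated, ``for every $\varepsilon>0$'', is false.
\begin{itemize}
\item Take $m$ a perfect square and $t=d(u)=\sqrt m$, so that $\mu=\frac{t}{m}d(u)=1$.
\item As $m\to\infty$, $d^t(u)$ converges in distribution to $\mathrm{Poisson}(1)$.
\item For $\varepsilon=5$, the left-hand side tends to $\Pr\{\mathrm{Poisson}(1)\ge 6\}\approx 5.9\cdot 10^{-4}$.
\item The right-hand side equals $2e^{-25/3}\approx 4.8\cdot 10^{-4}$, which is smaller.
\end{itemize}
The same example with independent Bernoulli variables shows that no bound of the form $2e^{-\varepsilon^2\mu/3}$ holds for all $\varepsilon$. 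So the Chernoff inequality invoked in the paper is necessarily a small-deviation bound. Restricting to $\varepsilon\in(0,1]$, as you propose, is the right fix. It costs nothing downstream, since \Cref{thm:credit_expectation} only uses the lemma with $\varepsilon\le\frac{1}{3}$.
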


\begin{proof}
    Consider the (independents) binary random variables $X_v = \mathbbm{1}\{ v \in N^t(u)\}$, for every $v \in N(u)$.
    Observe that $\Exp{} X_v = \frac{t (m-1)!}{m!} = \frac{t}{m}$, and that $d^t(u) = \sum_{v \in N(u)} X_v$.
    By linearity of expectation we have that $\Exp{}d^t(u) = \sum_{v \in N(u)} \Exp{}X_v = \frac{t}{m}d(u)$.
    By Chernoff inequality \cite[Corollary 2.3.4]{vershynin2018high} we obtain the lemma.
\end{proof}

Fix a parameter $\varepsilon \in (0,1)$, and consider the following partition of the neighborhood $N(u)$:
\begin{itemize}
    \item $\Good^+(u) := \{v \in N(u) \mid d(v) \geq (1+3\varepsilon)\gamma d(u)\}$;
    \item $\Good^-(u) := \{v \in N(u) \mid d(v) \leq (1-2\varepsilon)\gamma d(u)\}$;
    \item $\Bad(u) := \{v \in N(u) \mid (1-2\varepsilon) \gamma d(u) \leq d(v) \leq (1+3\varepsilon)\gamma d(u)\}$.
\end{itemize}
We now show that, as long as the degrees are large enough, the probability of a vertex $v \in \Good^+(u)$ to give a credit to $u$ tends to $1$, while for a vertex $v \in \Good^-(u)$ tends to $0$.
This is a desirable behavior of the algorithm. 
For the vertices in $\Bad(u)$, their behavior is uncertain, which is why they introduce a bias in the $\gamma$-degree estimate.
Fortunately, it is empirically clear that this bias is negligible on real graphs (see \Cref{fig:estimator}).

\begin{theorem}\label{thm:credit_expectation}
For every vertex $u \in V$, let $\Credit(u)$ be the credits collected by $u$ after a sequence of $m$ edge insertion.
For every $0 < \varepsilon \leq \frac{1}{3}$ we have 
\[
\Exp{} \Credit(u) = \gammadeg(u) \pm \Big( \vert \Bad(u) \vert + 4\sum_{v \in \Good(u)} e^{-\frac{1}{3}\varepsilon^2\min\{d(u), d(v)\}}\Big),
\]
where $\Good(u) = \Good^+(u)\cup\Good^-(u)$.
\end{theorem}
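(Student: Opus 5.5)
The plan is to write both $\Exp{}\Credit(u)$ and $\gammadeg(u)$ as sums over $v \in N(u)$ and compare them term by term, splitting according to the partition $\Good^+(u), \Good^-(u), \Bad(u)$.

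\textbf{Decomposition.} Since $\Credit(u)$ starts at $1$ (for $u$ itself) and gains one credit per incident edge $(u,v)$, we have $\Credit(u) = 1 + \sum_{v \in N(u)\setminus\{u\}} Y_v$. Here $Y_v = \mathbbm{1}\{d^{t_v}(v) \geq \gamma d^{t_v}(u)\}$ and $t_v$ is the (random) insertion time of $(u,v)$. Likewise $\gammadeg(u) = \sum_{v\in N(u)} Z_v$ with $Z_v = \mathbbm{1}\{\vert N(v)\vert \geq \gamma \vert N(u)\vert\}$ deterministic. By linearity,
\[
\Exp{}\Credit(u) - \gammadeg(u) = \sum_{v \in N(u)\setminus\{u\}} \big(\Prob{}\{Y_v=1\} - Z_v\big),
\]
so it suffices to bound each summand. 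For $v \in \Bad(u)$ both quantities lie in $[0,1]$, so each such term contributes at most $1$ in absolute value, giving the $\vert\Bad(u)\vert$ term.

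\textbf{The case $v \in \Good^+(u)$.} First, $Z_v = 1$: from $d(v) \geq (1+3\varepsilon)\gamma d(u)$ and $\gamma \le 1$ we get $d(v)+1 \geq \gamma(d(u)+1)$. So the error is $\Prob{}\{Y_v = 0\}$.

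Condition on $t_v = t$. The remaining edges then form a uniform permutation, so \Cref{lemma:concentration_degree} applies to both $d^t(u)$ and $d^t(v)$. The edge $(u,v)$ itself is present in both, which only shifts things by a constant and can be absorbed.

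If neither degree deviates by a factor $(1\pm\varepsilon)$, then
\[
d^t(v) \ge (1-\varepsilon)\tfrac{t}{m}d(v) \ge (1-\varepsilon)(1+3\varepsilon)\gamma \tfrac tm d(u) \ge (1+\varepsilon)\gamma\tfrac tm d(u) \ge \gamma d^t(u).
\]
The middle inequality $(1-\varepsilon)(1+3\varepsilon)\ge 1+\varepsilon$ is equivalent to $\varepsilon \le 3\varepsilon^2$ failing, i.e.\ it holds exactly when $\varepsilon\le 1/3$; this is where the hypothesis enters. A union bound over the two deviation events then gives
\[
\Prob{}\{Y_v=0\} \le 2e^{-\varepsilon^2 d(u)/3} + 2e^{-\varepsilon^2 d(v)/3} \le 4e^{-\frac13\varepsilon^2\min\{d(u),d(v)\}}.
\]

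\textbf{The case $v \in \Good^-(u)$.} This is symmetric. For degrees that are not tiny, $Z_v = 0$; the additive $+1$'s need a short check, which I would handle by absorbing them into the slack between $1-2\varepsilon$ and $1-\varepsilon$. A credit $Y_v=1$ forces $d^t(v)\ge\gamma d^t(u)$. Absent deviations this is impossible, because $(1+\varepsilon)(1-2\varepsilon) < 1-\varepsilon$. So the same union bound yields the same error.

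\textbf{Main obstacle.} \Cref{lemma:concentration_degree} only gives tails of the form $\exp(-\frac13\varepsilon^2\frac tm d)$, with the factor $\frac tm$ in the exponent. Edges inserted early therefore have weak concentration, whereas the statement has no $\frac tm$.

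My first attempt would be to average the conditional bound over the uniform $t$ and show that the early-time contribution is dominated. Failing that, I would exploit that the credit comparison really concerns the \emph{ratio} $d^t(v)/d^t(u)$. The relative order of the $d(u)+d(v)-1$ edges incident to $u$ or $v$ is a uniform permutation independent of $t$. Conditioning on the position of $(u,v)$ within this merged sequence, a hypergeometric tail argument should give a bound in terms of $d(u), d(v)$ directly.

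This step is the one I expect to require the most care. It may only yield the stated form up to adjusting constants or restricting to $t$ not too small.
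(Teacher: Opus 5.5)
Your route is essentially the paper's: decompose over $N(u)$, condition on the insertion time $t$ of $(u,v)$, intersect the two concentration events from \Cref{lemma:concentration_degree}, and average over $t$. The step you flag as the main obstacle is a genuine gap, and it cannot be closed. Your displayed bound $\Prob{}\{Y_v=0\}\le 2e^{-\varepsilon^2 d(u)/3}+2e^{-\varepsilon^2 d(v)/3}$ silently drops the factor $\frac tm$. With $a=\frac13\varepsilon^2\min\{d(u),d(v)\}$, averaging honestly over $t$ only gives
\[
\frac1m\sum_{t=1}^m 4e^{-at/m}\le 4\int_0^1 e^{-ax}\,dx\le \frac4a=\frac{12}{\varepsilon^2\min\{d(u),d(v)\}},
\]
which decays polynomially, not exponentially. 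The paper's own proof reaches the exponential form only through the invalid rewriting $\sum_{t}e^{-at/m}=e^{-a}\sum_t e^{-t/m}$. So there is no correct argument there for you to recover.

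In fact the statement is false as written, because early edges really do give credits. The first edge $(u,v)$ inserted at $u$ has $d^t(u)=1\le d^t(v)$, so it always gives a credit. Take $\gamma=\frac12$ and $\varepsilon=\frac14$. Let $u$ have neighbors $v_1,\dots,v_D$ (with $8\mid D$), each $v_i$ having $D/8-1$ further private leaves. Then $\Good^+(u)=\{u\}$, $\Good^-(u)=\{v_1,\dots,v_D\}$, $\Bad(u)=\emptyset$ and $\gammadeg(u)=1$. Yet $\Credit(u)\ge 2$ for every insertion order, while the claimed error bound is $4e^{-D/48}+4De^{-D/384}\to 0$.

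By symmetry each $v_i$ gives a credit with probability at least $1/D=1/(8\min\{d(u),d(v_i)\})$. So no argument can give per-neighbor errors of order $e^{-\Theta(\varepsilon^2\min\{d(u),d(v)\})}$. Your hypergeometric refinement fails for the same reason: position $1$ of $(u,v)$ among the $d(u)+d(v)-1$ relevant edges alone forces a credit with probability $1/(d(u)+d(v)-1)$. Restricting to $t$ not too small changes the statement rather than proving it.

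What your plan does yield, up to constants and once the conditioning on $t$ is handled via hypergeometric tails as you sketch, is the weaker bound
\[
\vert\Exp{}\Credit(u)-\gammadeg(u)\vert\le\vert\Bad(u)\vert+\sum_{v\in\Good(u)}\frac{12}{\varepsilon^2\min\{d(u),d(v)\}}.
\]
The example above shows that this $1/\min\{d(u),d(v)\}$ order cannot be improved in general, apart from the dependence on $\varepsilon$ and constants.
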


\begin{proof}
Fix any edge $(u,v) \in E$.
We start defining the binary random variable $X_v$ that is $1$ if $v$ gives a credit to $u$ when the edge $(u,v)$ is inserted (\ie\ at time $T$ uniformly distributed in $\{1,\dots, m\}$), and $0$ otherwise.

Now assume that $(u,v)$ comes at time $t \in [m]$.
Let us consider the event $\mathcal{E}_t := \{d^t(u) \in (1\pm\varepsilon)\tfrac{t}{m}d(u) \land d^t(v) \in (1\pm\varepsilon)\tfrac{t}{m}d(v)\}$.
By Lemma~\ref{lemma:concentration_degree} we have that $\Prob{}\{\mathcal{E}_t\} \geq 1 - 4\exp(-\frac{1}{3}\varepsilon^2\frac{t}{m}\min\{d(u),d(v)\})$.
If $\mathcal{E}_t$ holds, both the following conditions $d^t(v) \geq (1-\varepsilon)\frac{t}{m}d(v)$ and $d^t(u) \leq (1+\varepsilon)\frac{t}{m}d(u)$ are true.
It follows that $d^t(v) \geq \gamma d^t(u)$ is guaranteed if $d(v) \geq \frac{1+\varepsilon}{1-\varepsilon}\gamma d(u)$. At the same time, since $\varepsilon \leq \frac{1}{3}$, $d(v) \geq (1+3\varepsilon) \gamma d(u)$ implies $d(v) \geq \frac{1+\varepsilon}{1-\varepsilon}\gamma d(u)$\footnote{In fact, for $0 < \varepsilon \leq \frac{1}{3}$, we have that $1+3\varepsilon \geq \frac{1+\varepsilon}{1-\varepsilon}$.}.
In other words, if $v \in \Good^+(u)$ and the event $\mathcal{E}_t$ holds then $v$ gives the credit to $u$ at time $t$.
By marginalizing with respect to the insertion time of the edge $(u,v)$, we obtain
\begin{align*}
&\Prob{}\{X_v = 1 \mid v \in \Good^+(u)\}
= \frac{1}{m} \sum_{t=1}^{m}\Prob{}\{\mathcal{E}_T \mid T=t, v \in \Good^+(u)\}\\
&\geq \frac{1}{m} \sum_{t=1}^{m} 1-4 e^{\frac{1}{3}\varepsilon^2 \frac{t}{m}\min\{d(u),d(v)\}}
= 1 - \frac{4}{m}e^{-\frac{1}{3}\varepsilon^2\min\{d(u),d(v)\}}\sum_{t=1}^{m}e^{-t/m}\\
&\geq 1 - 4e^{-\frac{1}{3}\varepsilon^2\min\{d(u),d(v)\}},
\end{align*}
where the last inequality holds since $\sum_{i=1}^{m} e^{-t/m} < m$.
In the same way, if $v \in \Good^{-}(u)$ we have
\[
\Prob{}\{X_v = 1 \mid v \in \Good^{-}(u)\} \leq 4e^{-\frac{1}{3}\varepsilon^2\min\{d(u),d(v)\}}.
\]
We can now rewrite the expected number of credits of $u$ as
\[
\Exp{}\Credit(u) = \sum_{v \in N(u)} \Exp{}X_v = \sum_{v \in \Good^+(u)}\Exp{}X_v + \sum_{v \in \Good^-(u)}\Exp{}X_v + \sum_{v \in \Bad(u)}\Exp{}X_v.
\]
Thus, we have
\begin{align*}
&\sum_{v \in \Good^+(v)} \!\!\! 1-4e^{-\frac{1}{3}\varepsilon^2\min\{d(u),d(v)\}} \leq \Exp{}\Credit(u)
\\
&\leq \vert \Good^+(u) \vert +\!\!\! \sum_{v \in \Good^-(v)}\!\!\! 4e^{-\frac{1}{3}\varepsilon^2\min\{d(u),d(v)\}} + \vert \Bad(u) \vert.
\end{align*}
Since $\vert \Good^+(u) \vert \leq \gammadeg(u) \leq \vert \Good^+(u) \vert + \vert \Bad(u) \vert$, we finally obtain the theorem
\[
\vert \Exp{}\Credit(u) - \gammadeg(u) \vert \leq \vert \Bad(u) \vert + \sum_{v \in \Good(u)} 4e^{-\frac{1}{3}\varepsilon^2\min\{d(u),d(v)\}}. \qedhere
\]
\end{proof}

\subsection{Incremental algorithm for large quasi-cliques}

The \Cref{alg:init} initializes all the variables needed for the dynamic algorithm. More precisely, for every vertex $u \in V$ we have: the counter $\Credit(u)$ for the credits that $u$ holds, $\widetilde\Credit(u)$ that indicates the number of credit that $u$ had last time that $u$ was \emph{explored} to extract its quasi-clique, $\sigma(u)$ that is the $k$-min-hash signature of $N(u)$, and $S_u$ that is the quasi-clique associated to $u$.
Observe that for every vertex $u$ the space needed is $O(d(u) + k)$.
During the execution, we also maintain $S^*$, which is the largest quasi-clique found so far by the algorithm.

When a new edge $(u,v)$ is inserted, we first update the $k$-min-hash signatures $\sigma(u)$ and $\sigma(v)$, then the credits $\Credit(u)$ and $\Credit(v)$ as described before.
At this point we have to decide whether to \emph{explore} the vertices $u$ and $v$ by looking at all their neighborhoods $N(u)$ and $N(v)$ to update their quasi-cliques $S_u$ and $S_v$, as in \texttt{FastNBSim} algorithm.
Clearly, exploring the vertices at every insertion would result in an update time of $\Omega(k\Delta)$. To improve on that, our heuristics is based on the idea that we need to explore a vertex $u$ only if (i) the number of its credits $\Credit(u)$ has increased enough since the last time $u$ was explored (\ie\ $\Credit(u) \geq (1+\delta)\widetilde\Credit(u)$, for some parameter $\delta > 0$), and (ii) the number of credits $\Credit(u)$ is large enough with respect to the size of the current solution $\vert S^* \vert$ (\ie\ $\Credit(u) \geq \phi\vert S^* \vert$, for some parameter $\phi > 0$).
\Cref{alg:insert} implements the insert operation.

Point (i) ensures that we do not explore vertices too often, and \Cref{lemma:ammortized-cost-incremental} shows that this policy allows us to amortize the cost of the insertion operation.
Point (ii) instead relies on the fact that the $\gamma$-degree of a vertex $u$ is an upper-bound to the size of its quasi-clique $\vert S_u \vert$, so we want to avoid exploring vertices with $\gamma$-degrees that are too small.

\begin{lemma}\label{lemma:ammortized-cost-incremental}
    The amortized cost of the insert operation after any insertion sequence is $O(k\frac{\log{\Credit_{\max}}}{\delta})$, where $\Credit_{\max}$ is the maximum number of credits at the end of the sequence, and $k$ is the size of the $k$-min-hash signature.
\end{lemma}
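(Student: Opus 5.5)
We split the cost of each insertion of $(u,v)$ into a cheap part and an exploration part, and charge all exploration cost to credits earned since the previous exploration.

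\emph{Cheap part.} Updating the $k$-min-hash signatures $\sigma(u)$ and $\sigma(v)$ takes $O(k)$ time, since each of the $k$ coordinates changes by a single comparison with $\pi_i(v)$ (resp.\ $\pi_i(u)$). Incrementing the degrees, updating $\Credit(u),\Credit(v)$, and testing conditions (i) and (ii) take $O(1)$. So this part costs $O(k)$ per insertion, and this is already within the claimed bound.

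\emph{Exploration part.} Exploring a vertex $w$ runs \texttt{extract} on $w$. It estimates $\hat\cs(w,z)$ for every $z\in N(w)$ at $O(k)$ each, for a total of $O(k\,d(w))$. It then sets $\widetilde\Credit(w)\gets\Credit(w)$ and possibly updates $S^*$ in $O(1)$, or $O(|S_w|)$ if the set is copied, which is dominated.

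\emph{Bounding exploration cost by $\Credit(w)$.} The key step is to bound this cost in terms of $\Credit(w)$ rather than $d(w)$. Credits increase only upon insertions incident to $w$, and at most one per insertion, so $\Credit(w)\le d(w)+1$. The inequality therefore runs the wrong way for charging to credits, and a vertex with many non-crediting neighbours could have large $d(w)$ but few credits.

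I expect this to be the main obstacle. My first attempt is to observe that the bound is phrased in terms of $\Credit_{\max}$. Between two consecutive explorations of $w$, $\Credit(w)$ grows by a factor at least $1+\delta$ by condition (i). Hence $w$ is explored at most $\log_{1+\delta}\Credit_{\max}+1=O(\log\Credit_{\max}/\delta)$ times, using $\log(1+\delta)\ge\delta/2$ for $\delta\le1$.

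Each exploration of $w$ costs at most $O(k\,d(w))$, where $d(w)$ is the final degree. Summing over vertices, total exploration cost is $O\big(k\sum_w d(w)\log\Credit_{\max}/\delta\big)=O(k\,m\log\Credit_{\max}/\delta)$. Dividing by the $m$ insertions gives amortized $O(k\log\Credit_{\max}/\delta)$. Adding the $O(k)$ cheap part yields the lemma.

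This avoids the credit-versus-degree mismatch entirely. It charges each exploration to the whole final degree of $w$, and uses $\sum_w d(w)=2m$ to spread the cost over the insertion sequence.

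One subtlety must be checked: the first exploration, when $\widetilde\Credit(w)$ has its initial value. Initializing $\widetilde\Credit(w)=1$ and $\Credit(w)=1$, the number of explorations is still at most $1+\log_{1+\delta}\Credit_{\max}$. The amortization must also hold for \emph{any} prefix of the sequence. This follows by replacing final degrees and $\Credit_{\max}$ with their values at the end of that prefix, since both are monotone under insertions.
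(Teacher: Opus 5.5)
Your proposal is correct and matches the paper's proof: $O(k)$ per insertion for the two signatures, at most $O(\log_{1+\delta}\Credit_{\max}) = O(\log\Credit_{\max}/\delta)$ explorations per vertex by condition (i), each costing $O(k\,d(w))$ with the final degree, and $\sum_w d(w) = 2m$ to amortize over the $m$ insertions. Your bound $\ln(1+\delta)\ge\delta/2$ for $\delta\le 1$ is a cleaner justification than the paper's appeal to a Taylor expansion, and you can drop the abandoned detour about charging exploration cost to $\Credit(w)$.
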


\begin{proof}
    At each insertion, we update two $k$-min-hash signatures, and it takes $O(k)$ time.
    Throughout an entire insertion sequence of $m$ edges, we explore a single vertex $O(\log_{1+\delta} \Credit^m(u)) = O(\log_{1+\delta} \Credit_{\max})$ times, where  $\Credit_{\max} = \max_{u \in V} \Credit^m(u)$.
    By Taylor's expansion, for small values of $\alpha$, we can approximate $O(\log_{1+\delta}\Credit_{\max}) = O(\frac{\log\Credit_{\max}}{\delta})$.
    Since every time we explore a vertex $u$ we have an extra cost of $O(kd(u))$, we obtain an amortized cost of
    \begin{align*}
        &\frac{2mk + \sum_{u \in V}\frac{\log\Credit_{\max}}{\delta} \cdot O(kd(u))}{m}
        = 2k + \frac{\log\Credit_{\max}}{\delta}k\left( \frac{O\left(\sum_{u \in V} d(u)\right)}{m}\right)\\
        &=2k + \frac{\log\Credit_{\max}}{\delta}k \cdot O(1)
        = O\left( k\frac{\log\Credit_{\max}}{\delta} \right).
    \end{align*} 
\end{proof}

\begin{algorithm}
    \caption{\texttt{init}}
    \label{alg:init}
    \begin{flushleft}
    \textbf{Input:} an empty graph $G$.
    \end{flushleft}
    \begin{algorithmic}[1]
        \ForAll{$u \in V$}
            \State $\Credit(u) \gets 1$
            \State $\widetilde{\Credit}(u) \gets 1$
            \State $S_u \gets \{u\}$
            \State $\sigma(u) \gets \sigma(N(u))$
        \EndFor
        \State $S^* \gets \text{any } S_x$
    \end{algorithmic}
\end{algorithm}
\begin{algorithm}
    \caption{\texttt{insert}}
    \label{alg:insert}
    \begin{flushleft}
    \textbf{Input:} a new edge $(u,v)$, and two parameters $\delta,\phi \in (0,1)$.
    \end{flushleft}
    \begin{algorithmic}[1]
        \State $E \gets E \cup \{(u,v)\}$
        \State update $\sigma(u)$ and $\sigma(v)$
        \For{$x \in \{u,v\}$}
            \State $y \in \{u,v\} \setminus \{x\}$
            \IIf{$d(y) \geq \gamma d(x)$}
                $\Credit(x) \gets \Credit(x) + 1$
            \EndIIf
            \If{$\Credit(x) \geq (1+\delta)\widetilde{\Credit}(x) \land \Credit(x) \geq \phi \vert S^* \vert$}
                \State $\widetilde{\Credit}(x) \gets \Credit(x)$
                \State $S_x \gets \texttt{extract}(x, \gamma, b)$
                \IIf{$\vert S_x \vert > \vert S \vert$}
                    $S^* \gets S_x$
                \EndIIf
            \EndIf
        \EndFor
    \end{algorithmic}
\end{algorithm}

\begin{note}
The algorithm can be generalized and take as input a non-empty graph $G$.
The only necessary change is in the \texttt{init} procedure: just set $\Credit(u)$ and $\widetilde{\Credit}(u)$ equal to the actual $\gamma$-degree $\gammadeg(u)$, $S_u$ as the output of the procedure $\texttt{extract}(u,\gamma,b)$, and $S^*$ as the largest of the $S_u$.
\end{note}

\section{Extending the credit approach to the fully dynamic case}\label{sec:fully}

In this section we show how to adapt the Credit-Based approach of \Cref{sec:incremental} to support both edge insertions and edge deletions.

Adapting the incremental algorithm requires addressing two main challenges: (i) assigning and removing credits, so that a vertex's credits are still, in some way, a sort of index indicating how potentially profitable it is to explore that vertex (as happens with $\gammadeg$), and (ii) deciding how many times to explore a vertex to avoid performance degradation.

Regarding the first point, we adopt the simple strategy of decrementing $\Credit(u)$ by $1$ if, upon the removal of edge $(u,v)$, the vertex $v$ had previously given a credit to $u$.
Note that this condition can be verified in constant time if, for each neighbor $v$ of $u$, we store a bit indicating whether $v$ gave credit to $u$ when $(u,v)$ was inserted.
Although this idea might seem naive, it still empirically turns out to provide a good estimator for the $\gamma$-degrees (see \Cref{fig:estimator}).

The second challenge is more delicate.
As mentioned, an exploration policy that allows vertices to be explored as little as possible while simultaneously ensuring good solution quality would be ideal.
On the one hand, we would like a lazy strategy that guarantees some amortization.
At the same time, it is desirable to explore a vertex $u$ when its quasi-clique $S_u$ has potentially changed significantly since the last time $u$ was explored.
Formally, to prevent a vertex from being exported too many times, the algorithm maintains for each vertex $x$ a counter $\texttt{count}_x$ which is incremented each time the number of credits $\Credit(x)$ is modified (incremented or decremented).
At this point we impose that a vertex $x$ is free to be explored only when the condition $\widetilde{\Credit}(x) + \texttt{count}_x \geq (1+\delta)\widetilde{\Credit}(x)$ holds.
Note that, when the update sequence consists only of edge insertions, this condition is completely equivalent to the $\Credit(x) \geq  (1+\delta)\widetilde{\Credit}(x)$ condition of the incremental \Cref{alg:insert} at Line~6.
Now, we impose a second condition, namely $\max\{\Credit(x), \vert S_x \vert \} \geq \phi \vert S^* \vert$, which is designed to explore only those vertices $x$ for which the quasi-clique $S_x$ has potentially changed a lot, where $S^*$ is largest solution along all the $S_x$.
As before, if $\Credit(x) \geq \phi \vert S^* \vert$ means that $x$ has a high $\gamma$-degree, then $S_x$ could be potentially large.
On the other hand, if the $\gamma$-degree of $x$ has dropped significantly, then $S_x$ may no longer be a good (or valid) solution. This case is instead captured by $\vert S_x \vert \geq \phi \vert S^* \vert$. Intuitively, this condition captures the case where $S_x$ was a large solution, but given many operations on $x$, it may have been adversely affected, and therefore it is necessary to re-explore $x$ in order to guarantee a good quality of the solution.
Again, note that whenever $\max\{\Credit(x), \vert S_x \vert \} = \Credit(x)$ this condition coincides exactly with the condition in the incremental \Cref{alg:insert}.
This last equality is true on average, since $\Credit(x)$ is a good estimate of $\gammadeg(x)$, and because $\vert S_x \vert \leq \gammadeg(x)$.

Finally, when both conditions $\widetilde{\Credit}(x) + \texttt{count}_x \geq (1+\delta)\widetilde{\Credit}(x)$ and $\max\{\Credit(x), \vert S_x \vert \} $ $\geq \phi \vert S^* \vert$ are true, we can update $\widetilde{\Credit}(x)$, reset $\texttt{count}_x$, re-explore $x$ to extrapolate the new $S_x$, and update $S^*$.

\begin{algorithm}
    \caption{(new) \texttt{init}}
    \label{alg:init_fully_dynamic}
    \begin{flushleft}
    \textbf{Input:} an empty graph $G$.
    \end{flushleft}
    \begin{algorithmic}[1]
        \State initialize a priority queue \texttt{PQ}
        \ForAll{$u \in V$}
            \State $\Credit(u) \gets 1$
            \State $\widetilde{\Credit}(u) \gets 1$
            \State $\texttt{count}_u \gets 0$
            \State $S_u \gets \{u\}$
            \State $\texttt{PQ.set}(u, \vert S_u \vert)$
            \State $\sigma(u) \gets \sigma(N(u))$
        \EndFor
    \end{algorithmic}
\end{algorithm}
\begin{algorithm}
    \caption{\texttt{update}}
    \label{alg:update}
    \begin{flushleft}
    \textbf{Input:} an edge $(u,v)$, two parameters $\delta,\phi \in (0,1)$, and $\oplus \in \{-,+\}$.
    \end{flushleft}
    \begin{algorithmic}[1]
        \State update $E$ accordingly
        \State update $\sigma(u)$ and $\sigma(v)$
        \For{$x \in \{u,v\}$}
            \State $y \in \{u,v\} \setminus \{x\}$
            
            \State \textbf{if} $([\oplus = + \textbf{ and } d(y) \geq \gamma d(x)]$ \textbf{or}
            \Statex \hspace*{\algorithmicindent} \;\;\,\, $[\oplus = - \textbf{ and } y \text{ gave a credit to } x])$  \textbf{then}
                \State\hspace*{\algorithmicindent}\!\! $\Credit(x) \gets \Credit(x) \oplus 1$
                \State\hspace*{\algorithmicindent}\!\! $\texttt{count}_x \gets \texttt{count}_x + 1$
                
            \State \textbf{if} $(\widetilde{\Credit}(x) + \texttt{count}_x \geq (1+\delta)\widetilde{\Credit}(x)$ \textbf{and}
            \Statex \hspace*{\algorithmicindent} \;\;\,\,$\max\{\Credit(x), \vert S_x \vert \} \geq \phi \cdot  \texttt{PQ.getMax}())$ \textbf{then}
                \State\hspace*{\algorithmicindent}\!\! $\widetilde{\Credit}(x) \gets \Credit(x)$
                \State\hspace*{\algorithmicindent}\!\! $\texttt{count}_x \gets 0$
                \State\hspace*{\algorithmicindent}\!\! $S_x \gets \texttt{extract}(x, \gamma, b)$
                \State\hspace*{\algorithmicindent}\!\! $\texttt{PQ.updateKey}(x, \vert S_x \vert)$
        \EndFor
    \end{algorithmic}
\end{algorithm}

Note that in the fully dynamic case, there is no monotonicity in the size of $S^*$ over time, meaning it could grow and shrink. Therefore, we keep all $S_x$ sorted by size using a priority queue, and get the size of $S^*$ in constant time from the queue.
The complete pseudo-code of the update is reported in \Cref{alg:update}, while \Cref{alg:init_fully_dynamic} initializes all the new variables.
The \texttt{update} procedure includes both insertion and deletion, so it also takes as input $\oplus \in \{-,+\}$, that is, the type of operation to perform ($+$ for insertion and $-$ for deletion).

The final piece of the puzzle is updating the $k$-min-hash signatures.
While it's easy to update the $k$-min-hash signatures for incremental sequences, it's not possible to do so directly if there are also removals.
Jaccard similarity estimation in fully dynamic contexts has been widely studied, and various solutions have been proposed over the years \cite{BSS20,CGPS25,VOS,MROS,BOTBINDynSCAN}.
We use the recent \emph{Buffered min-hash} \cite{CGPS25}, currently the only data structure that allows for a fully dynamic min-hash, at the cost of minimal memory and little extra update time.

\section{Experimental results}\label{sec:experiments}
In this section we evaluate our approach on real datasets. We first assess the effectiveness of the incremental algorithm and then analyze the performance of our solution in the fully dynamic setting.

\paragraph{Platform.}
Our experiments were performed on a machine with 2.3 GHz Intel Xeon Gold 5118 CPU with 24 cores, 192 GB of RAM. The whole code is written in \texttt{C++}, compiled with \texttt{GCC 10} with optimization flag \texttt{-O3}.

\paragraph{Datasets}\label{par:dataset}

We evaluate our algorithms on both real and synthetic update sequences. The datasets are downloaded from SNAP \cite{snap}, NetworkRepository\footnote{\url{https://networkrepository.com}} and DyReach project \cite{HanauerHS20}.
For real data we consider two incremental datasets, \texttt{HPDyn} and \texttt{linux}, and one fully dynamic dataset, \texttt{wiki}.
Directed graphs are treated as undirected, and we remove self-loops and
multi-edges during pre-processing.
The main characteristics of the datasets are reported in
\Cref{tab:dataset-stats}.

To generate additional fully dynamic workloads, we start from static real graphs and construct synthetic update sequences using the following scheme.
We generate a random permutation of the edges and insert the first half of them. Then, at each step, we insert the next edge in the permutation with probability $1-p$, or delete uniformly at random one of the currently present edges with probability $p$. The final graph is therefore a subgraph of the original static graph. 
Other types of synthetic sequences are discussed and tested in \Cref{sec:further_exp}.

\begin{table*}
    \centering
    \caption{Dataset characteristics after pre-processing. For fully dynamic datasets, $\vert E \vert$ denotes the total number of updates (insertions and deletions).}
    \label{tab:dataset-stats}
    \begin{tabular}{llrrcc}
        \toprule
        \textbf{Dataset} & \textbf{Full Name} & \textbf{$\vert V \vert$} & \textbf{$\vert E\vert$} & \textbf{Type} & \textbf{Directed} \\
        \midrule
        FB & Ego-facebook & 4,039 & 88,234 & Static & \ding{56} \\ 
        HP & Ca-HepPh & 12,008 & 118,521 & Static &  \ding{56} \\  
        CM & Ca-CondMat & 23,133 & 93,497 & Static &  \ding{56}\\
        ER & Email-Enron & 36,692 & 183,831& Static &  \ding{56}\\
        GW & Loc-Gowalla & 196,591 & 950,327 & Static & \ding{56} \\
        SF & Web-Stanford & 281,903 & 2,312,497 & Static & \ding{52} \\
        LJ & LiveJournal & 3,997,692 & 34,681,189 & Static & \ding{52} \\
        \midrule
        wiki & Wikipedia Simple & 100,350 & 1,571,518 & Fully Dynamic& \ding{52}  \\
        linux & Linux & 100,000 & 159,996 & Incremental& \ding{52}  \\
        HPDyn & ca-cit HepPh & 100,000 & 3,148,487 & Incremental& \ding{56}  \\
        \bottomrule
    \end{tabular}
\end{table*}

\subsection{\texorpdfstring{$\gamma$}{gamma}-degree estimation}

The efficiency of our algorithms relies on the quality of the $\gamma$-degree estimator. In this section we empirically validate the estimator used in the incremental and fully dynamic algorithms (\Cref{sec:incremental,sec:fully}).

For each dataset, we execute the entire update sequence. For every vertex $u$, we compute the average number of accumulated credits over the $10$ independents runs, and compare it with the true $\gamma$-degree of $u$ in the final graph.

\Cref{fig:estimator} shows the results for the incremental dataset \texttt{linux} and the fully dynamic dataset \texttt{wiki}; results for the remaining datasets are reported in 
Appendix~\ref{apx:more_exp}.
Overall, the accumulated credits closely approximate the $\gamma$-degree and, most importantly, preserve the relative ordering of the vertices.

\begin{figure}[t!]
    \centering  
    \includegraphics[width=.8\linewidth]{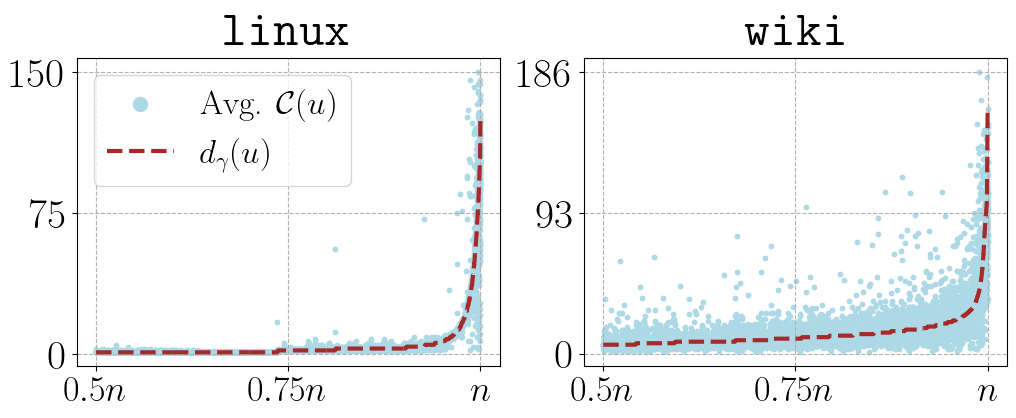}
    \caption{Average number of credits over $10$ runs. The $x$-axis reports vertices sorted by $\gamma$-degree, while the $y$-axis shows the average number of credits. Results are shown for \texttt{linux} (incremental) and \texttt{wiki} (fully dynamic). This representation highlights that the estimator preserves the relative ordering of vertices.}
    \label{fig:estimator}
    \Description{Average number of credits.}
\end{figure}

Some variance is observed, particularly in the incremental case (\texttt{linux}), where the estimator may slightly underestimate the $\gamma$-degree, while in the fully dynamic sequence (\texttt{wiki}) it may occasionally overestimate it.
Nevertheless, the estimator tracks well the $\gamma$-degree of high-$\gamma$ vertices, which are the most relevant for detecting large quasi-cliques~\cite{pang}.
    
\subsection{Parameter impact and setting}
The setting of the parameters $\delta, \phi, k$ governs the quality and performance of the algorithm.
The parameter $\phi$ controls the threshold on the accumulated credits used to trigger the extraction phase.
Since the $\gamma$-degree estimator may slightly underestimate the true value, choosing $\phi < 1$ prevents the algorithm from discarding promising vertices due to estimation noise.
This aspect becomes particularly relevant in the fully-dynamic setting, where fluctuations are more pronounced.
In practice, $\phi$ acts as a safety margin: smaller values make the algorithm more permissive, while values too close to $1$ may cause the algorithm to miss relevant quasi-cliques candidates.
However, decreasing $\phi$ excessively provides little benefits, as the relevant vertices are already explored for moderate values of the parameter.

The parameter $\delta$ mainly affects the running time.
It controls the amortization of the exploration costs by limiting how often vertex are explored.
As shown in \Cref{fig:tradeoffs}, the speed-up grows monotonically with $\delta$, since larger values reduce the number of explorations performed by the algorithm. 
While this monotonic behavior is theoretically guaranteed in the incremental setting (\Cref{lemma:ammortized-cost-incremental}), our experiments show that it also holds empirically on fully-dynamic sequences.

Finally, the parameter $k$ of the $k$-min-hash signatures influences both the quality and running time.
Small values of $k$ may lead to inaccurate containment estimates, especially on graphs with high-degree vertices, which in turn may degrade the quality of the reported quasi-cliques.
Conversely, large values of $k$ increase the computational cost without improving the estimation when neighborhoods are small. 
Hence, $k$ introduces an additional quality-performance trade-off.
The effect of this parameter has been studied in the static setting by Pang et al.~\cite{pang}. 

To study the quality-performance trade-offs, we performed an extensive set of experiments with several combinations of parameters. The performances of different parameter settings on three different datasets are summarized in \Cref{fig:tradeoffs}. 
Based on this analysis, we set $\delta = 0.3$, $\phi = 0.8$ and, $k = 64$, which provide the best overall trade-offs in our experiments.

Finally, we set $\gamma= 0.9$ and $b = 0.6$ for all the algorithm, following the configuration suggested in \cite{pang} .

\begin{figure}
    \centering
    \includegraphics[width=\linewidth]{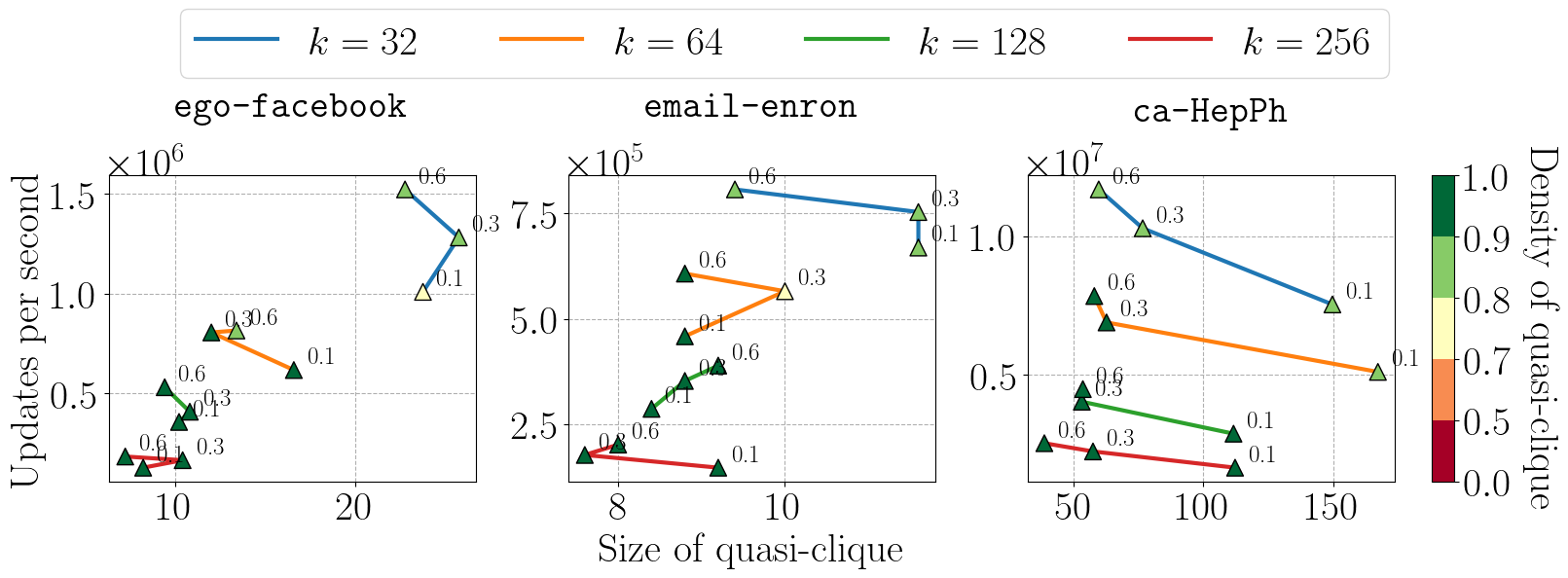}
    \caption{Different performance-quality tradeoffs obtained by the Credit-Based algorithm with different sets of parameters $k$ and $\delta$ on 3 datasets. On the x-axis the size of the returned quasi-clique, on the y-axis the number of updates per second. The triangles are density markers. Each curve corresponds to a different value of $k$. Each curve has three markers corresponding to the values of $\delta = 0.1, 0.3, 0.6$. For the first two datasets the sequence of updates used are two subgraph sequences with $p=0.1$ while the third one is an incremental dataset.}
    \label{fig:tradeoffs}
    \Description{Different performance-quality tradeoffs obtained by the Credit-Based algorithm with different sets of parameters $k$ and $\delta$ on 3 datasets.}
\end{figure}

\subsection{Comparison with the Baseline}\label{subsec:incremental_exp}
In this section we compare the two proposed dynamic algorithms, Baseline and Credit-Base, in order to study the relationship between the quality of the solution (\ie\ the size and density of the reported quasi-clique), and the performance.
In this section we consider \emph{only} real dynamic graphs.
A set of several synthetic sequences are deferred to \Cref{sec:further_exp}.

\begin{figure}
    \centering
    \includegraphics[width=\linewidth]{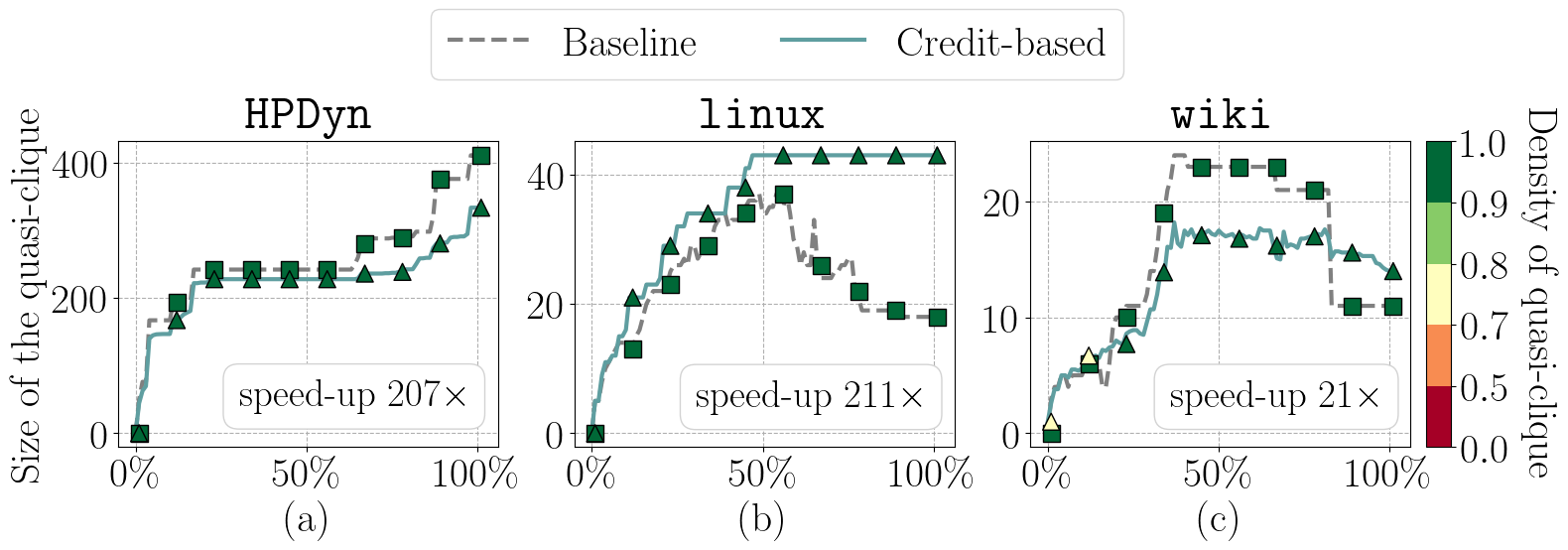}
    \caption{The plots illustrate the evolutionary size and density $\alpha$-quasi‑cliques found by the Credits-Based algorithm and the Baseline approach on three real dynamic sequences. The x‑axis shows the number of insertions/deletions processed; the y‑axis reports the size of the largest quasi‑clique found. The triangles are density markers and their color encodes subgraph density. The graphs (a) and (b) are incremental while (c) is fully dynamic.}
    \label{fig:quality}
    \Description{The plots illustrate the evolutionary size and density $\alpha$-quasi‑cliques found by the Credits-Based algorithm and the Baseline approach on three real dynamic sequences.}
\end{figure}

\Cref{fig:quality} shows representative results for the incremental dataset \texttt{linux} and \texttt{HPDyn}, and the fully dynamic dataset \texttt{wiki}.
In the incremental setting (\Cref{fig:quality} (a) and (b)), the Credit-Based algorithm achieves substantial speed-up over the Baseline while maintaining solutions of comparable quality.
In Particular, the density of the reported $\alpha$-quasi-cliques consistently remains in the high-density range (over $0.9$), showing that the algorithm does not sacrifice solution quality for efficiency.

The key idea is that the credit mechanism acts as a filter for small or irrelevant updates.
The condition requiring that the credits of a vertex to exceed $\phi \vert S^* \vert$, allows the algorithm to ignore edge insertions that do not significantly affect large quasi-cliques.
As a result, the algorithm avoids unnecessary recomputation and achieves speed-ups of up to $200\times$.
The trade-off is that, in some cases, the algorithm may miss slightly larger quasi-cliques; however with appropriate parameter settings the difference with the baseline remains negligible.

In the fully-dynamic case (\Cref{fig:quality} (c)), the same approach still provides significant speed-ups, although smaller than in the incremental case.
Edge deletions may invalidate the current quasi-clique, and thus  require more frequent re-validations of the neighborhoods, as discussed in \Cref{sec:fully}.
Consequently, the algorithm must explore the graph more often in order to return good-quality solutions, reducing the achievable acceleration.

Despite this effect, the Credit-Based algorithm still substantially outperforms the Baseline, while preserving the overall quality of the solutions.
The observed differences in size and density of the $\alpha$-quasi-cliques are similar to those in the incremental setting, confirming that the fully dynamic extension empirically preserves good quality-speed-up trade-offs.

\subsection{Other experiments on synthetic sequences}\label{sec:further_exp}

To complement the experimental evaluation, and further validate the effectiveness and efficiency of our algorithm, we perform additional experiments on several synthetic sequences, comparing our solution with the baseline.

We start describing how we generate the update sequences starting from real networks.

\paragraph{Graph Augmentation Sequence.}
We first insert all edges following a random permutation of the edge set. Subsequently, at each step, we insert a new edge with probability $1-p$ or delete a currently existing edge, chosen uniformly at random, with probability $p$. New edges are selected according to one of the following strategies (following~\cite{ZhaoGRB0025}): (i) \textsc{Random-Random} (\textsc{RR}), where a non-existing edge is chosen uniformly at random; and (ii) \textsc{Degree-Random} (\textsc{DR}), where the first endpoint $u$ is selected with probability $d(u)/2m$ and the second endpoint is chosen uniformly at random.

\paragraph{Results.}
We performed the same experiments as in \Cref{subsec:incremental_exp}, comparing the Credit-Based algorithm with the baseline.
We evaluate the size and density of the quasi-cliques reported by the algorithms over time, as well as the overall speed-up achieved by the Credit-Based algorithm relative to the baseline.
The parameter configuration for all experiments was set to $\delta = 0.3$, $\phi = 0.8$, and $k=64$.

The results are reported in \Cref{fig:further_exp_permutation,fig:further_exp_fully_dyn,fig:further_exp_degree_random,fig:further_exp_random_random}.
We observe behaviors similar to those described in \Cref{subsec:incremental_exp}. In sparse graph settings, such as at the beginning of the sequence, $\alpha$-quasi-cliques are relatively small. In these cases, our algorithm occasionally produces low-density $\alpha$-quasi-cliques. This occurs because, with small cliques, even a minor misestimation of edge counts significantly impacts the calculated density. This is fundamentally linked to the min-hash property, where approximation errors are more pronounced when the underlying set is small. However, we consider this scenario of limited interest, as naive algorithms typically perform better on such sparse graphs \cite{HanauerHS22}.

\begin{figure}
    \centering
    \includegraphics[width=\linewidth]{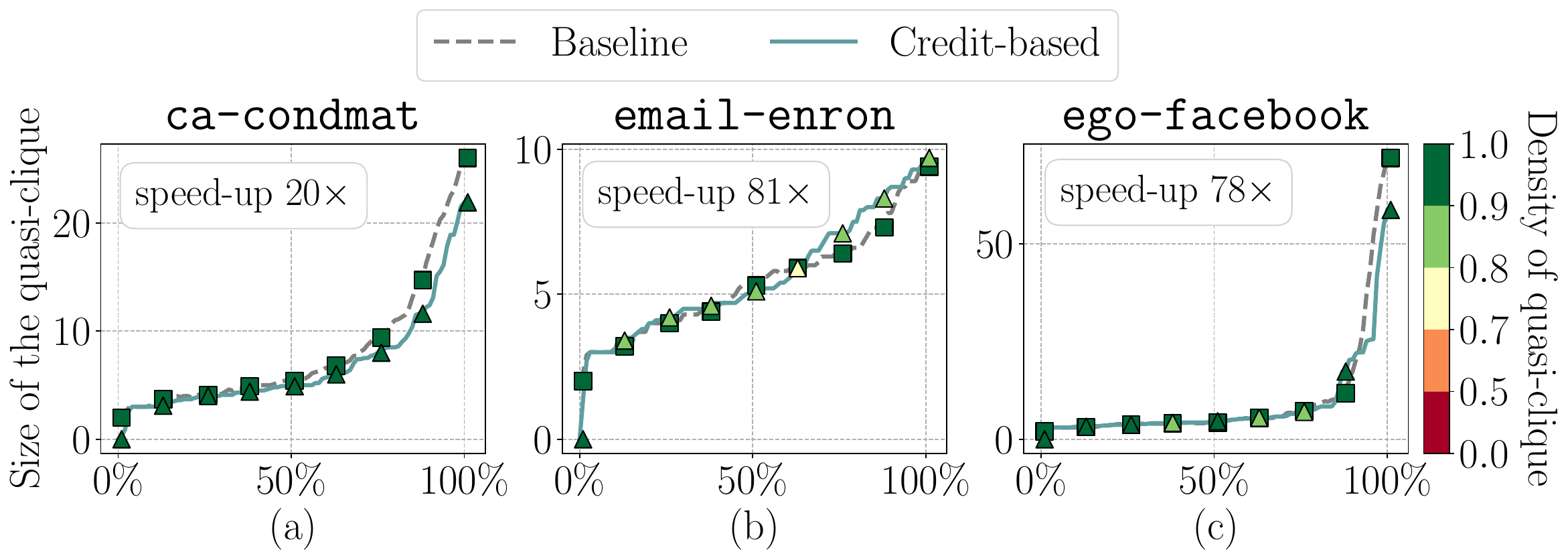}
    \caption{Performance comparison between the Credit-Based algorithm and the baseline using a sequence of edge insertions. The insertion order is determined by a random permutation of the edge set.}
    \label{fig:further_exp_permutation}
    \Description{Performance comparison between the Credit-Based algorithm and the baseline using a sequence of edge insertions.}
\end{figure}

\begin{figure}
    \centering
    \includegraphics[width=\linewidth]{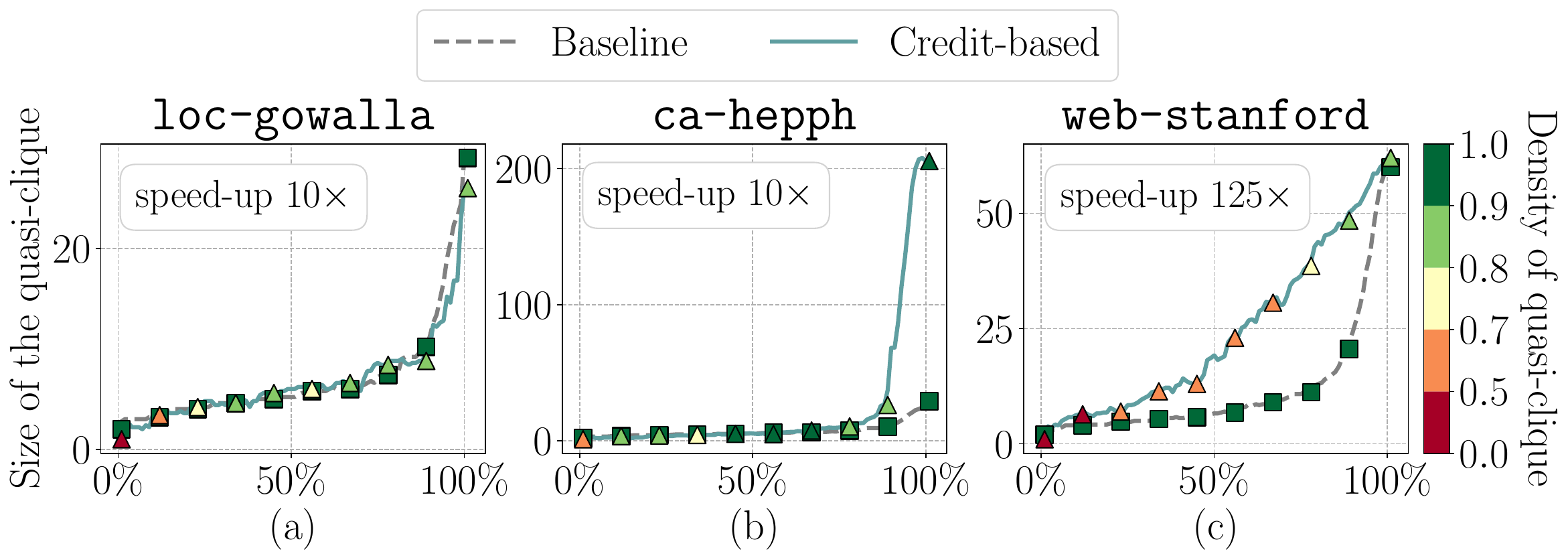}
    \caption{Comparison of the Credit-Based algorithm and the baseline using synthetic fully dynamic workloads. Update sequences are generated by randomly permuting edges: after an initial insertion of $50\%$ of the edges, subsequent updates consist of insertions with probability $1-p$ and deletions with probability $p=0.1$.}
    \label{fig:further_exp_fully_dyn}
    \Description{Comparison of the Credit-Based algorithm and the baseline using synthetic fully dynamic workloads.}
\end{figure}

\begin{figure}
    \centering
    \includegraphics[width=\linewidth]{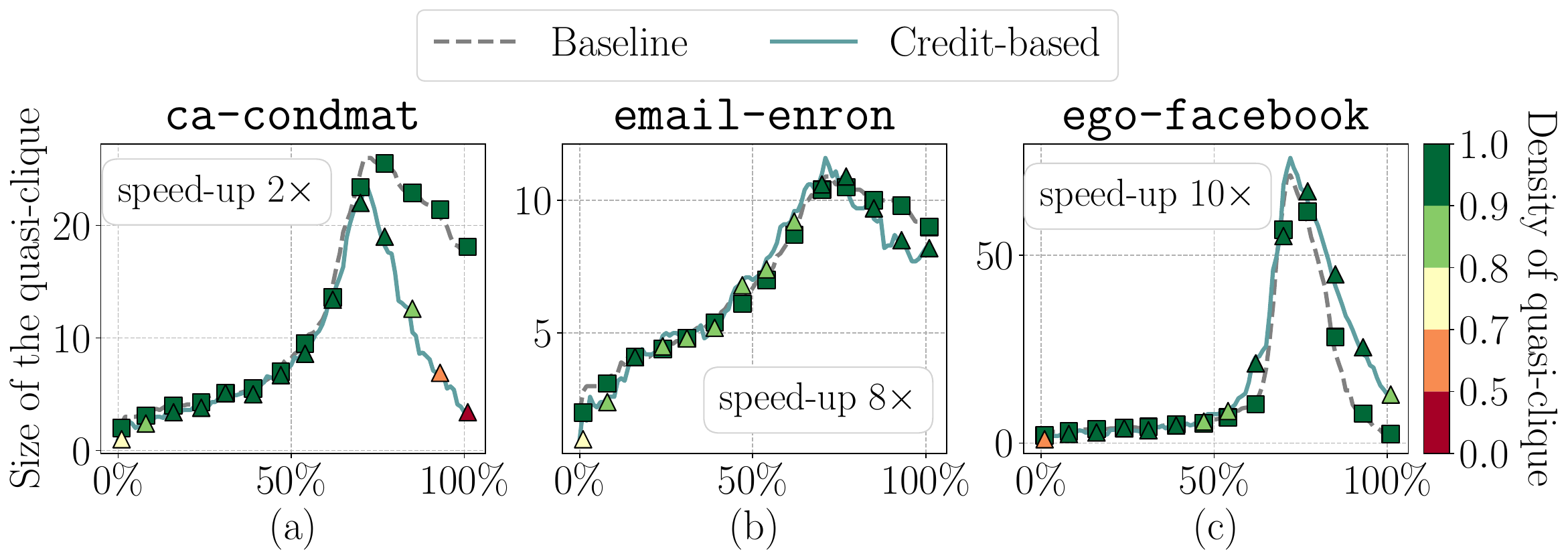}
    \caption{Performance comparison between the Credit-Based algorithm and the baseline using \textsc{Graph Augmentation} sequences. $20\%$ of new edges are added according to the \textsc{Degree-Random} distribution. The probability of deletion is $p = 0.1$.}
    \label{fig:further_exp_degree_random}
    \Description{Performance comparison between the Credit-Based algorithm and the baseline using \textsc{Graph Augmentation} sequences.}
\end{figure}

\begin{figure}
    \centering
    \includegraphics[width=\linewidth]{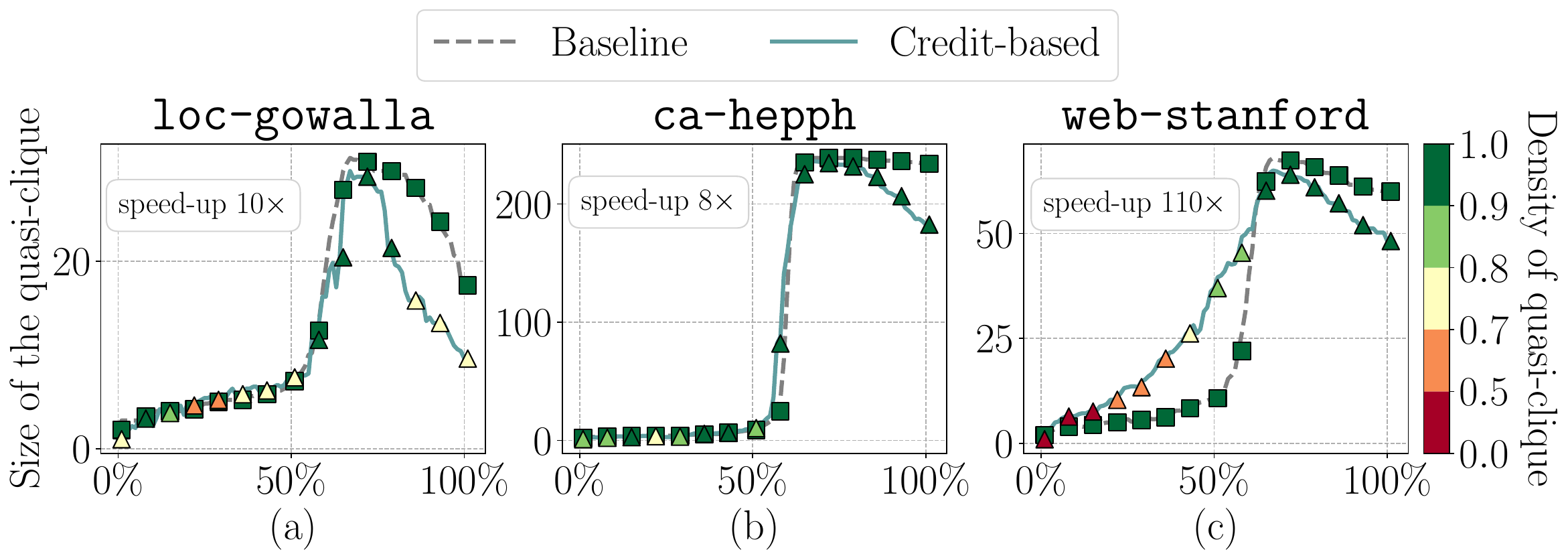}
    \caption{Performance comparison between the Credit-Based algorithm and the baseline using \textsc{Graph Augmentation} sequences. $20\%$ of new edges are added according to the \textsc{Random-Random} distribution. The probability of deletion is $p = 0.1$.}
    \label{fig:further_exp_random_random}
    \Description{Performance comparison between the Credit-Based algorithm and the baseline using \textsc{Graph Augmentation} sequences.}
\end{figure}


\clearpage

\balance
\bibliographystyle{ACM-Reference-Format}
\bibliography{ref}

\clearpage
\appendix
\section{Deferred Proofs}\label{apx:deferred_proofs}

\subsection{Proof of Lemma~\ref{lm:cs_update_delete}}

\begin{proof}
Let $N'(x)$ be neighborhood of $x$ after the deletion.
Since only the edge $(x,y)$ is removed, we have $N'(x) = N(x) \setminus \{y\}$, 
so $|N'(x)| = |N(x)| - 1$. For any $w \notin \{x,y\}$, the neighborhood 
$N'(w) = N(w)$ remains unchanged, hence $|N'(w)| = |N(w)|$.
We distinguish two cases based on whether $w \in N(y)$.

\paragraph{Case 1: $w \in N(x) \cap N(y)$.}
Since $w \in N(y)$, we have $y \in N(w) \cap N(x)$. After removing $(x,y)$, 
$y \notin N'(x)$, so $|N'(w) \cap N'(x)| = |N(w) \cap N(x)| - 1$. Therefore:
\[
    \tau'(x, w) = \frac{|N'(w) \cap N'(x)|}{|N'(x)|}  
         = \frac{|N(w) \cap N(x)| - 1}{|N(x)| - 1} 
         = \frac{\tau(x, w)\,|N(x)| - 1}{|N(x)| - 1},
\]
\[
    \tau'(w, x) = \frac{|N'(w) \cap N'(x)|}{|N'(w)|} 
         = \frac{|N(w) \cap N(x)| - 1}{|N(w)|} 
         = \tau(w, x) - \frac{1}{|N(w)|}.
\]

\paragraph{Case 2: $w \in N(x) \setminus N(y)$.}
Since $w \notin N(y)$, $y$ does not contribute to $N(w) \cap N(x)$, 
so $|N'(w) \cap N'(x)| = |N(w) \cap N(x)|$. Therefore:
\[
    \tau'(x, w) = \frac{|N'(w) \cap N'(x)|}{|N'(x)|}  
         = \frac{|N(w) \cap N(x)|}{|N(x)| - 1} 
         = \frac{\tau(x, w)\,|N(x)|}{|N(x)| - 1},
\]
\[
    \tau'(w, x) = \frac{|N'(w) \cap N'(x)|}{|N'(w)|} 
         = \frac{|N(w) \cap N(x)|}{|N(w)|} 
         = \tau(w, x).
\]
\end{proof}

\subsection{Proof of Lemma~\ref{lm:cs_update_insert}}

\begin{proof}
Let $N'(x)$ be neighborhood of $x$ after the insertion.
Since only the edge $(x,y)$ is added, we have $N'(x) = N(x) \cup \{y\}$, 
so $|N'(x)| = |N(x)| + 1$. For any $w \notin \{x,y\}$, the neighborhood 
$N'(w) = N(w)$ remains unchanged, hence $|N'(w)| = |N(w)|$.
We distinguish two cases based on whether $w \in N(y)$.

\paragraph{Case 1: $w \in N(x) \cap N(y)$.}
Since $w \in N(y)$, we have $y \in N(w) \cap N'(x)$. After adding $(x,y)$, 
$y \in N'(x)$, so $|N'(w) \cap N'(x)| = |N(w) \cap N(x)| + 1$. Therefore:
\[
    \tau'(x, w) = \frac{|N'(w) \cap N'(x)|}{|N'(x)|}  
         = \frac{|N(w) \cap N(x)| + 1}{|N(x)| + 1} 
         = \frac{\tau(x, w)\,|N(x)| + 1}{|N(x)| + 1},
\]
\[
    \tau'(w, x) = \frac{|N'(w) \cap N'(x)|}{|N'(w)|} 
         = \frac{|N(w) \cap N(x)| + 1}{|N(w)|} 
         = \tau(w, x) + \frac{1}{|N(w)|}.
\]

\paragraph{Case 2: $w \in N(x) \setminus N(y)$.}
Since $w \notin N(y)$, $y$ does not contribute to $N(w) \cap N(x)$, 
so $|N'(w) \cap N'(x)| = |N(w) \cap N(x)|$. Therefore:
\[
    \tau'(x, w) = \frac{|N'(w) \cap N'(x)|}{|N'(x)|}  
         = \frac{|N(w) \cap N(x)|}{|N(x)| + 1} 
         = \frac{\tau(x, w)\,|N(x)|}{|N(x)| + 1},
\]
\[
    \tau'(w, x) = \frac{|N'(w) \cap N'(x)|}{|N'(w)|} 
         = \frac{|N(w) \cap N(x)|}{|N(w)|} 
         = \tau(w, x).
\]
\end{proof}

\section{Further Experiments}\label{apx:more_exp}
\subsection{\texorpdfstring{$\gamma$}{gamma}-degree estimation}

\Cref{apx:fig:insertion_only_gamma_degree,apx:fig:fully_dynamic_gamma_degree} present the experimental results for the $\gamma$-degree estimation using the credit-based estimator. Specifically, \Cref{apx:fig:insertion_only_gamma_degree} illustrates the performance on insertion-only sequences, while \Cref{apx:fig:fully_dynamic_gamma_degree} focuses on fully dynamic sequences.
In both figures, the sequences are generated according to the procedure explained in \Cref{sec:experiments}.

\begin{figure}[h!]
    \centering
    \begin{subfigure}[b]{0.3\linewidth}
        \centering
        \includegraphics[width=\textwidth]{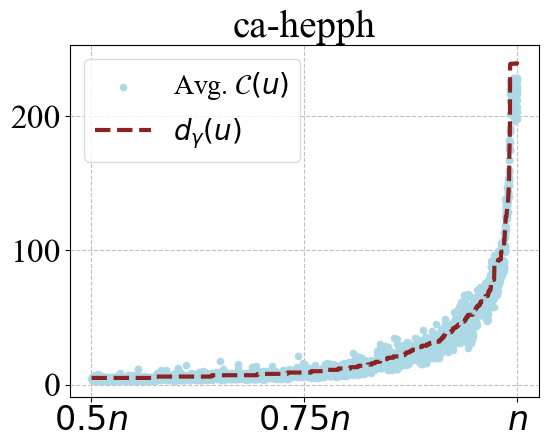}
        \label{apx:fig:condmat}
    \end{subfigure}
    \hfill
    \begin{subfigure}[b]{0.3\linewidth}
        \centering
        \includegraphics[width=\textwidth]{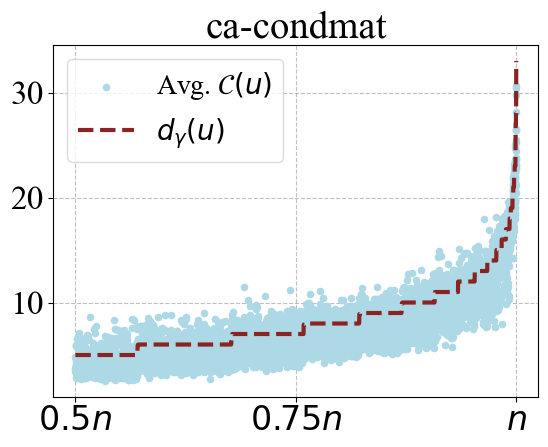}
        \label{apx:fig:hepph}
    \end{subfigure} 
    \hfill
    \begin{subfigure}[b]{0.3\linewidth}
        \centering
        \includegraphics[width=\textwidth]{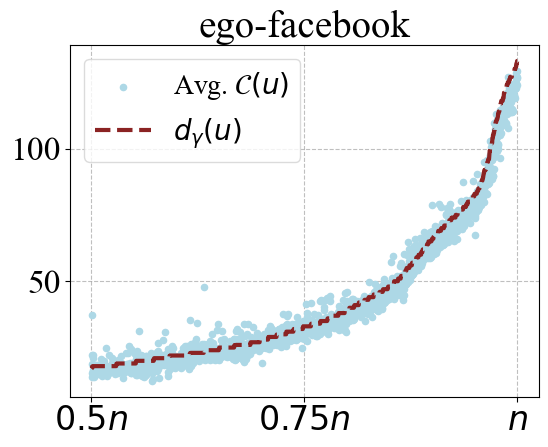}
        \label{apx:fig:facebook}
    \end{subfigure}
    \vspace{0.5cm} 
    \begin{subfigure}[b]{0.3\linewidth}
        \centering
        \includegraphics[width=\textwidth]{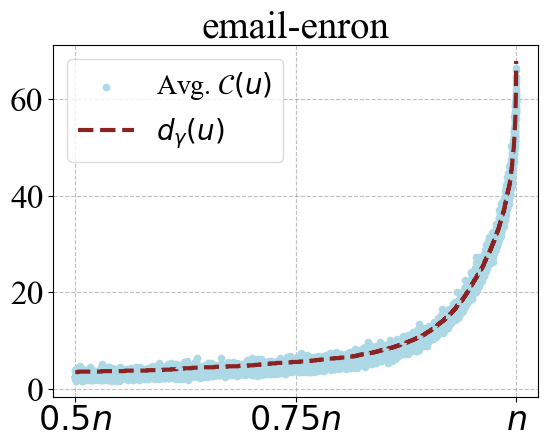}
        \label{apx:fig:enron}
    \end{subfigure}
    \hfill
    \begin{subfigure}[b]{0.3\linewidth}
        \centering
        \includegraphics[width=\textwidth]{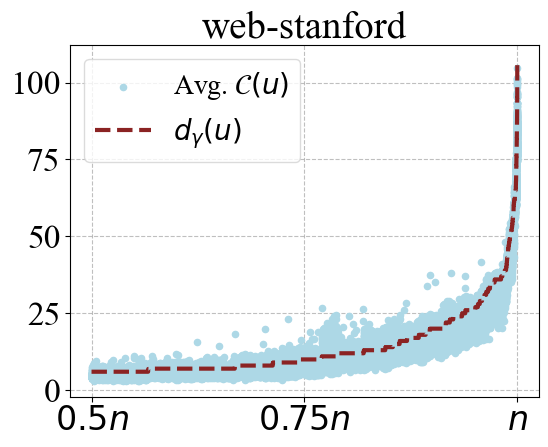}
        \label{apx:fig:linux}
    \end{subfigure}
    \hfill
    \begin{subfigure}[b]{0.3\linewidth}
        \centering
        \includegraphics[width=\textwidth]{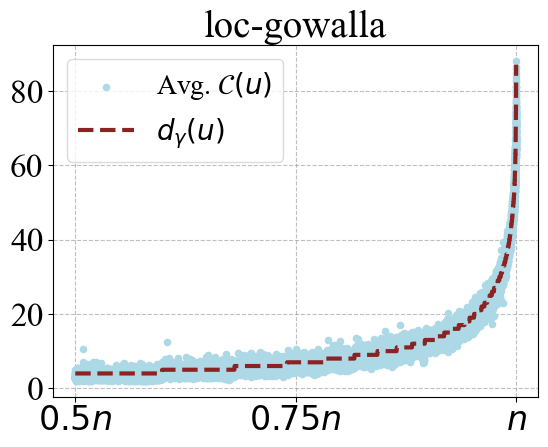}
        \label{apx:fig:gowalla}
    \end{subfigure}
    
    \caption{Average number of credits over $10$ runs. The $x$-axis reports vertices sorted by $\gamma$-degree, while the $y$-axis shows the average number of credits. Results are shown for static graphs. This representation highlights that the estimator preserves the relative ordering of vertices.}
    \label{apx:fig:insertion_only_gamma_degree}
    \Description{}
\end{figure}

\begin{figure}[h!]
    \centering
    \begin{subfigure}[b]{0.3\linewidth}
        \centering
        \includegraphics[width=\textwidth]{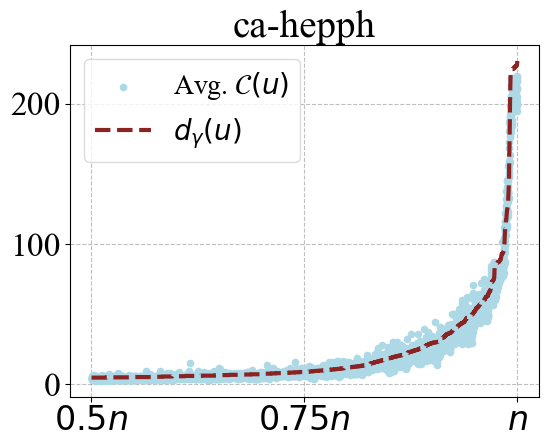}
        \label{fig:condmat}
    \end{subfigure}
    \hfill
    \begin{subfigure}[b]{0.3\linewidth}
        \centering
        \includegraphics[width=\textwidth]{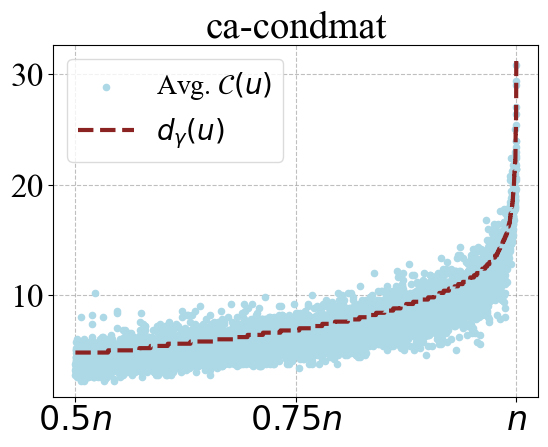}
        \label{fig:hepph}
    \end{subfigure} 
    \hfill
    \begin{subfigure}[b]{0.3\linewidth}
        \centering
        \includegraphics[width=\textwidth]{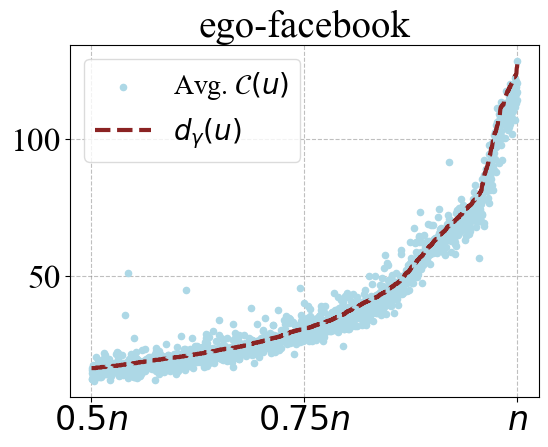}
        \label{fig:facebook}
    \end{subfigure}
    \vspace{0.5cm}
    \begin{subfigure}[b]{0.3\linewidth}
        \centering
        \includegraphics[width=\textwidth]{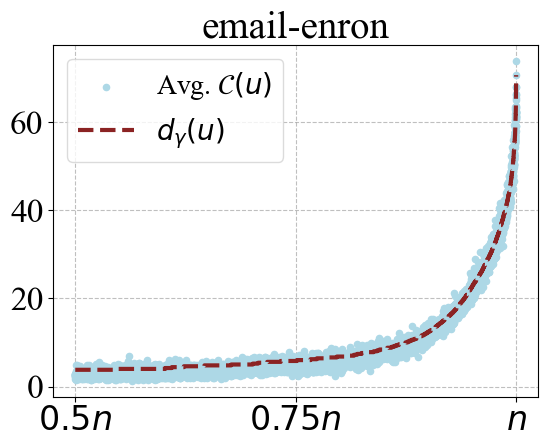}
        \label{fig:enron}
    \end{subfigure}
    \hfill
    \begin{subfigure}[b]{0.3\linewidth}
        \centering
        \includegraphics[width=\textwidth]{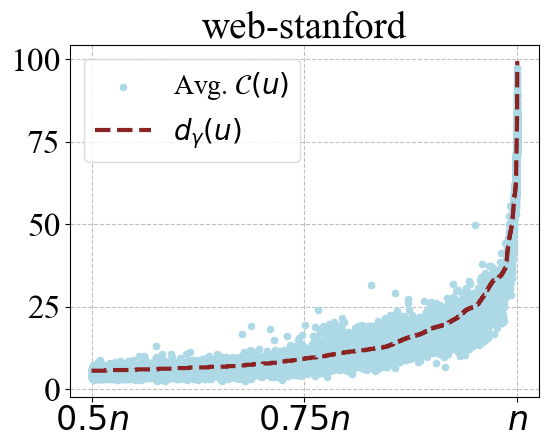}
        \label{fig:linux}
    \end{subfigure}
    \hfill
    \begin{subfigure}[b]{0.3\linewidth}
        \centering
        \includegraphics[width=\textwidth]{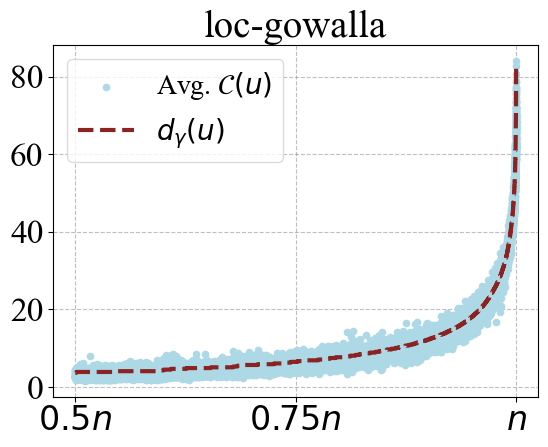}
        \label{fig:gowalla}
    \end{subfigure}
    
    \caption{Average number of credits over $10$ runs. The $x$-axis reports vertices sorted by $\gamma$-degree, while the $y$-axis shows the average number of credits. Results are shown for dynamic synthetics sequences with $p=0.1$. This representation highlights that the estimator preserves the relative ordering of vertices.}
    \label{apx:fig:fully_dynamic_gamma_degree}
    \Description{}
\end{figure}

\clearpage
\subsection{Comparison with baseline}
In this section, we extend the experimental evaluation on synthetic datasets from \Cref{sec:further_exp}, demonstrating the performance of our solution as the min-hash signature size $k$ varies.
Experimental results for \texttt{web-stanford} and \texttt{live-journal} are shown in \Cref{apx:fig:comparison_webstanford} and \Cref{apx:fig:comparison_livejournal}, respectively.

\begin{figure}
    \centering
    \includegraphics[width=.9\linewidth]{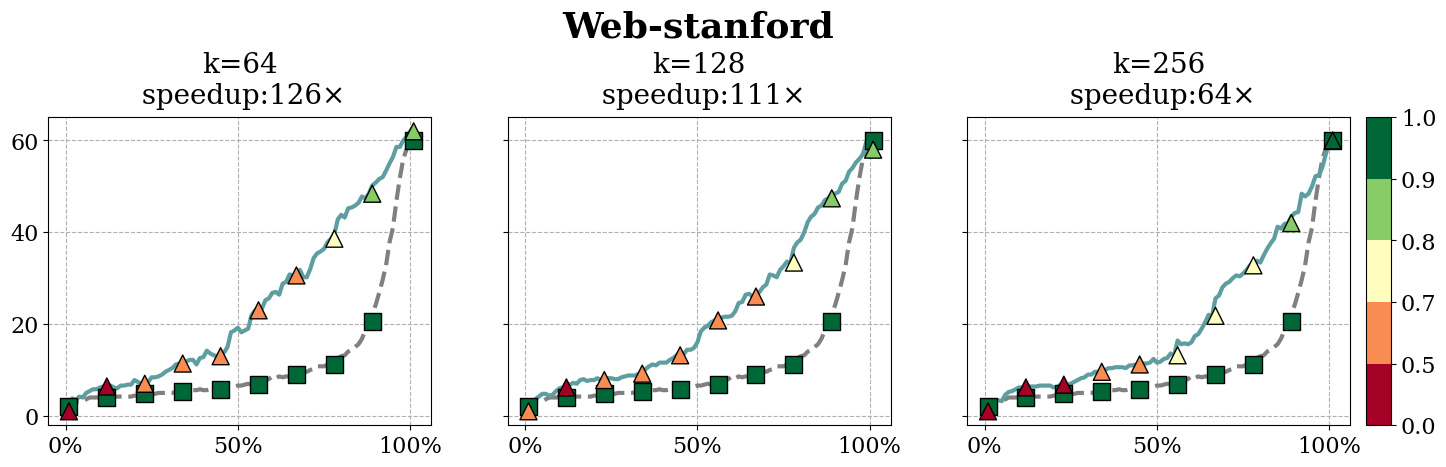}
    
    \caption{Quality of the solutions obtained using the random permutation sequence with $p=0.1$ using various value of $k$}
    \label{apx:fig:comparison_webstanford}
    \Description{}
\end{figure}

\begin{figure}
    \centering
    \includegraphics[width=.9\linewidth]{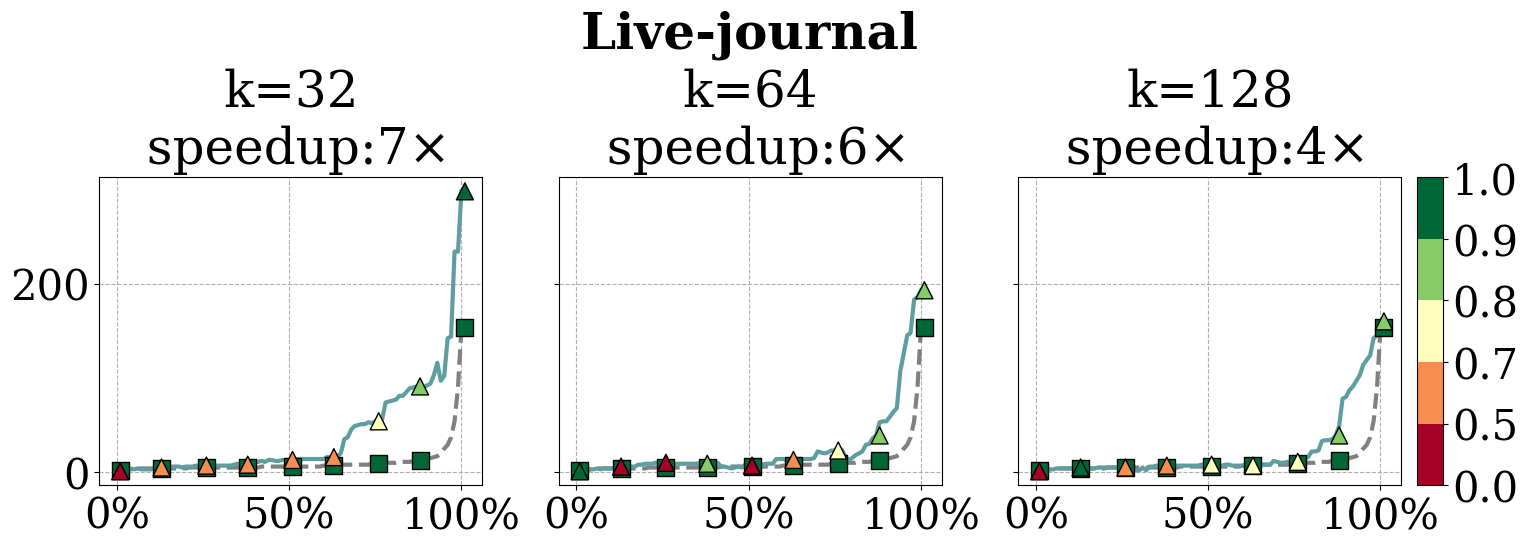}
    
    \caption{Quality of the solutions obtained using the incremental random permutation sequence using various value of $k$}

    \label{apx:fig:comparison_livejournal}
    \Description{}
\end{figure}

\end{document}